\documentclass[acmsmall]{acmart}

\usepackage{booktabs}
\usepackage{multirow}
\usepackage{subfig}

\AtBeginDocument{%
  }

\setcopyright{acmlicensed}
\copyrightyear{2026}
\acmYear{2026}
\acmDOI{XXXXXXX.XXXXXXX}

\acmJournal{TOIS}
\acmVolume{37}
\acmNumber{4}
\acmArticle{111}
\acmMonth{9}

\begin{document}

\title{Mitigating Popularity Bias in Recommendation with Global Listwise Learning and Progressive Bi-Weighting}


\author{Tianyu Zhu}
\authornote{This work was partly conducted at the University of Montreal.}
\email{ztybuaa@126.com}
\affiliation{%
  \institution{School of Economics and Management, Beihang University}
  \city{Beijing}
  \country{China}
}

\author{Jiandong Ding}
\email{jdding@fudan.edu.cn}
\affiliation{%
  \institution{College of Computer Science and Artificial Intelligence, Fudan University}
  \city{Shanghai}
  \country{China}
}

\author{Yansong Shi}
\authornote{Corresponding author.}
\email{shiys@fudan.edu.cn}
\affiliation{%
  \institution{School of Management, Fudan University}
  \city{Shanghai}
  \country{China}
}

\author{Guoqing Chen}
\email{chengq@sem.tsinghua.edu.cn}
\affiliation{%
  \institution{School of Economics and Management, Tsinghua University}
  \city{Beijing}
  \country{China}
}

\author{Jian-Yun Nie}
\email{nie@iro.umontreal.ca}
\affiliation{%
  \institution{Department of Computer Science and Operations Research, University of Montreal}
  \city{Montreal}
  \country{Canada}
}

\renewcommand{\shortauthors}{Zhu et al.}

\begin{abstract}
    In recommender systems, user feedback typically follows a long-tail distribution, which leads many recommendation algorithms to exacerbate popularity bias by disproportionately favoring popular items. To mitigate this issue, recent studies have employed Inverse Propensity Scoring (IPS) to rebalance training data via reweighting user-item interactions. However, the effectiveness of IPS-based approaches is often constrained by locally unbiased objectives and inaccurate propensity estimation. In this paper, we propose Multinomial Likelihood with Bi-Weighting (Mult-BiW) to address these limitations. First, we introduce a debiasing framework, termed Mult-IPS, which integrates multinomial likelihood with IPS to capture global and unbiased user preferences over the entire item set. Second, we develop a Bi-Weighting (BiW) strategy that jointly leverages propensity scores and a collection model, incorporating a smoothing mechanism to enhance the robustness of propensity estimation. We further provide theoretical analyses that establish an upper bound on the empirical bias and characterize the optimal form of the collection model. Third, to mitigate the adverse effects of aggressive reweighting on representation learning, we design a Progressive Bi-Weighting strategy that gradually transitions from discriminative representation learning to popularity debiasing. Extensive experiments on real-world datasets show that Mult-BiW consistently outperforms state-of-the-art baselines.
\end{abstract}


\begin{CCSXML}
<ccs2012>
   <concept>
       <concept_id>10002951.10003317.10003347.10003350</concept_id>
       <concept_desc>Information systems~Recommender systems</concept_desc>
       <concept_significance>500</concept_significance>
       </concept>
   <concept>
       <concept_id>10002951.10003227.10003351.10003269</concept_id>
       <concept_desc>Information systems~Collaborative filtering</concept_desc>
       <concept_significance>500</concept_significance>
       </concept>
 </ccs2012>
\end{CCSXML}

\ccsdesc[500]{Information systems~Recommender systems}
\ccsdesc[500]{Information systems~Collaborative filtering}

\keywords{recommender systems, popularity bias, inverse propensity scoring}

\received{13 May 2024}
\received[revised]{26 November 2024}
\received[revised]{29 January 2026}
\received[accepted]{22 September 2026}

\maketitle

\section{Introduction}
Recommender systems have experienced rapid growth in recent years and have become a core component of many online platforms, including e-commerce, location-based services, and social media. By modeling users’ historical behaviors, these systems aim to provide personalized content that aligns with users’ latent interests. However, a fundamental challenge in real-world recommender systems arises from the long-tail distribution of user feedback, where a small fraction of popular items accounts for the majority of interactions, while the vast majority of tail items receive limited exposure \cite{canamares2018should,abdollahpouri2020multi}. This severe imbalance in item popularity often induces popularity bias in learned models. When trained on such skewed data, recommendation algorithms tend to propagate and even amplify the bias by over-recommending popular items \cite{abdollahpouri2017controlling,abdollahpouri2020connection}, thereby undermining personalization quality, serendipity, and fairness \cite{chen2023bias}.

To mitigate popularity bias, a variety of debiasing techniques have been proposed. One line of work focuses on \textit{ranking adjustment}, which heuristically calibrates item scores or rankings by compensating for item popularity \cite{abdollahpouri2017controlling,zhu2021popularity}. Despite their empirical effectiveness, these approaches typically lack a principled theoretical foundation and may yield suboptimal or unstable results \cite{zhang2021causal}. More recently, causal inference has inspired methods like DICE \cite{zheng2021disentangling}, which utilizes \textit{causal embeddings} to disentangle user interest from conformity effects. However, DICE exhibits limited debiasing capability, as it only models relative popularity differences between item pairs and does not explicitly address global popularity imbalance. 

Inverse Propensity Scoring (IPS) \cite{schnabel2016recommendations,ma2019missing} has emerged as a more principled and effective debiasing paradigm. By reweighting user-item interactions according to their propensities, i.e., exposure probabilities, IPS aims to correct the mismatch between observed and ideal data distributions during training. Although IPS-based methods have demonstrated strong debiasing performance, they suffer from two critical issues that remain largely unaddressed in existing studies.

\textit{Limitations of Locally Unbiased Objectives}.  
Existing IPS-based methods typically adopt pointwise or pairwise loss functions. For instance, Rel-MF \cite{saito2020unbiased} proposes an unbiased pointwise loss based on logistic or Gaussian likelihoods, while UBPR \cite{saito2020unbiased2} derives an unbiased pairwise ranking objective from Bayesian Personalized Ranking (BPR) \cite{rendle2009bpr}. These approaches focus on achieving local unbiasedness by correcting individual interactions or pairwise item comparisons. However, they are inherently limited in capturing users’ global preference structures over the entire item set \cite{liang2018variational,chen2020efficient2}. This limitation is further exacerbated when IPS is incorporated, as the long-tail distribution of item popularity can induce highly skewed weights, causing the model to overemphasize certain items. This observation motivates the need for a more suitable loss function that enables IPS-based methods to learn globally unbiased preferences in item recommendation.

\textit{Sensitivity to Propensity Misspecification}.  
IPS relies on accurate propensity values to effectively correct popularity bias. In practice, however, true propensities are unobservable and must be estimated. Most existing methods approximate propensities using item popularity statistics \cite{saito2020unbiased,saito2020unbiased2}. Such estimators tend to systematically underestimate the exposure probabilities of tail items \cite{saito2020asymmetric}, since true propensities may be influenced by factors beyond popularity, such as social effects or user exploration behaviors. Inaccurate propensity estimation can introduce substantial empirical bias into IPS estimators, thereby severely degrading recommendation performance \cite{chen2023bias}. This challenge highlights the necessity of more principled and robust propensity estimation methods with theoretical guarantees.

In this paper, we propose a novel approach, termed \textit{Multinomial Likelihood with Bi-Weighting} (Mult-BiW), to address the aforementioned challenges. Mult-BiW consists of three key components. First, we introduce a listwise unbiased learning framework, named Mult-IPS, which integrates IPS with the multinomial likelihood to model global and unbiased user preferences over the item set. By leveraging the softmax function under a constrained probability budget, the multinomial likelihood encourages competition among items and naturally captures global ranking information for each user. Moreover, Mult-IPS is model-agnostic and can be seamlessly applied to a wide range of recommendation backbones. Second, we propose a Bi-Weighting (BiW) strategy for improving propensity estimation. Inspired by the \textit{Jelinek--Mercer} smoothing technique in language modeling \cite{zhai2004study}, BiW combines estimated propensities with a collection model to enhance robustness, effectively increasing the weights of tail items while attenuating those of popular items. We further provide theoretical analyses that establish an upper bound on the empirical bias induced by inaccurate propensities and derive the optimal form of the collection model in BiW. Third, recognizing that aggressive rebalancing from the early training stages may impair representation learning \cite{zhou2020bbn}, we design a Progressive Bi-Weighting (PBiW) strategy that gradually increases the influence of propensity weights over training epochs. This \textit{cumulative learning} scheme allows the model to smoothly shift from discriminative representation learning to effective popularity debiasing.

Our main contributions are summarized as follows:
\begin{itemize}
    \item We propose a listwise unbiased learning framework that employs IPS-weighted multinomial likelihood to capture global user preferences over the entire item set, addressing the limitations of local preference learning in existing pointwise and pairwise debiasing methods. The framework is model-agnostic and broadly applicable.
    \item We introduce a Bi-Weighting strategy that improves propensity estimation by smoothing propensities with a collection model. We provide theoretical analyses to characterize the optimal collection model and further propose a Progressive Bi-Weighting scheme to mitigate the negative impact of data rebalancing on representation learning.
    \item We conduct extensive experiments on six real-world datasets, demonstrating that Mult-BiW consistently outperforms state-of-the-art debiasing baselines and yields more accurate and balanced recommendations.
\end{itemize}

The remainder of this paper is organized as follows. Section~\ref{sec:pre} introduces the preliminaries, Section~\ref{sec:model} details the proposed method, Section~\ref{sec:exp} presents experimental evaluations, and Section~\ref{sec:rel} reviews related work. Finally, Section~\ref{sec:con} concludes the paper and outlines future research directions.

\section{Preliminaries}
\label{sec:pre}

\subsection{Problem Formulation}
We first introduce the basic concepts and definitions relevant to our study. The notation used throughout this paper is summarized in Table~\ref{tab:notation} and will be specified as needed. We begin with the conventional top-$N$ recommendation problem \cite{deshpande2004item}.

\begin{definition}[Top-$N$ Recommendation]
Let $\mathcal{U}=\{u_1,u_2,\ldots,u_{|\mathcal{U}|}\}$ denote the set of users, $\mathcal{I}=\{i_1,i_2,\ldots,i_{|\mathcal{I}|}\}$ denote the set of items, and $\mathbf{Y}\in\mathbb{R}^{|\mathcal{U}|\times|\mathcal{I}|}$ denote the binary user-item interaction matrix, where $y_{ui}=1$ indicates that user $u$ has interacted with item $i$. The top-$N$ recommendation task aims to predict preference scores for unobserved user-item pairs and recommend the top-$N$ items with the highest predicted scores for each user.
\end{definition}

In practice, recommendation models are typically trained and evaluated on randomly split interaction data, which inherently reflects the long-tail distribution of item popularity. As a result, both training and testing sets are biased toward popular items. To properly assess the effectiveness of popularity debiasing methods, evaluation should be conducted on unbiased or balanced test data. Following common practice \cite{liang2016causal,bonner2018causal,wei2021model}, we adopt a data splitting strategy that constructs balanced test sets with a uniform item distribution. Accordingly, we define the popularity-debiased recommendation problem as follows \cite{zheng2021disentangling}.

\begin{definition}[Popularity-Debiased Recommendation]
Popularity-debiased recommendation aims to train recommendation models on biased training data $\mathcal{D}_{\mathrm{train}}$, in which item interactions follow a long-tail distribution, and to evaluate them on unbiased test data $\mathcal{D}_{\mathrm{test}}$, where items are uniformly distributed. The objective is to maximize recommendation performance on $\mathcal{D}_{\mathrm{test}}$ under non-IID popularity distributions in the training and test data.
\end{definition}

In this work, we focus on addressing the popularity-debiased top-$N$ recommendation problem by developing principled learning and reweighting strategies.

\begin{table}[!t]
    \centering
    \caption{Notation.}
    \begin{tabular}{ll}
        \toprule
        Symbol & Description \\
        \midrule
        $\mathcal{U}$ & Set of users \\
        $\mathcal{I}$ & Set of items \\
        $\mathbf{Y}$ & User-item interaction matrix \\
        $\mathbf{R}$ & User-item relevance matrix \\
        $\hat{\mathbf{R}}$ & Predicted user-item relevance matrix \\
        $\gamma_{ui}$ & Probability that item $i$ is relevant to user $u$, i.e., $P(r_{ui}=1$) \\
        $\theta_{ui}$ & Probability that item $i$ is exposed to user $u$, i.e., $P(y_{ui}=1|r_{ui}=1)$ \\
        $\hat{\theta}_{ui}$ & Estimated propensity of user-item pair $(u,i)$ \\
        $\delta_{ui}$ & Local loss for user-item pair $(u,i)$ \\
        $p_i$ & Popularity of item $i$ \\
        $\alpha$ & Propensity balancing hyperparameter \\
        $T$ & Current training epoch \\
        $T_{max}$ & Total number of training epochs \\
        $\eta$ & Cumulative learning hyperparameter \\
        \bottomrule
    \end{tabular}
    \label{tab:notation}
\end{table}

\subsection{IPS for Popularity Debiasing in Item Recommendation}
We briefly review the Inverse Propensity Scoring (IPS) framework for popularity debiasing in item recommendation \cite{saito2020unbiased}. Common recommendation metrics, such as Recall and NDCG, are defined with respect to the interaction probability $P(y_{ui}=1)$. However, observed interactions do not always faithfully reflect users’ true interests, as they are confounded by exposure and popularity effects. Consequently, recommendation policies should ideally be evaluated based on item relevance to users rather than observed interactions alone \cite{yang2018unbiased}.

Under this perspective, an ideal pointwise loss function that directly optimizes user relevance can be formulated as \cite{saito2020unbiased}:
\begin{equation}
\label{equ:ideal}
\mathcal{L}_{\mathrm{ideal}}(\hat{\mathbf{R}})
= \frac{1}{|\mathcal{U}|}
\sum_{u\in\mathcal{U}}\sum_{i\in\mathcal{I}}
\gamma_{ui}\,\delta_{ui}^{(1)} + (1-\gamma_{ui})\,\delta_{ui}^{(0)},
\end{equation}
where $\gamma_{ui}=P(r_{ui}=1)$ denotes the probability that item $i$ is relevant to user $u$, $r_{ui}\in\{0,1\}$ represents the unobserved relevance label, and $\delta_{ui}^{(r_{ui})}$ denotes the local loss for the user-item pair $(u,i)$ given relevance $r_{ui}$. A typical choice for $\delta_{ui}^{(r_{ui})}$ is the binary cross-entropy loss \cite{saito2020unbiased}, i.e., $\delta_{ui}^{(r_{ui})}=-r_{ui}\log\sigma(\hat{r}_{ui})-(1-r_{ui})\log(1-\sigma(\hat{r}_{ui}))$.

Since relevance labels are unobservable, IPS introduces propensities to construct an unbiased estimator of the ideal loss.

\begin{definition}[Propensity]
The propensity of a user-item pair $(u,i)$, denoted by $\theta_{ui}$, is defined as the conditional probability $\theta_{ui}=P(y_{ui}=1\!\mid\! r_{ui}=1)$, representing the likelihood that a relevant item is observed (i.e., interacted with) by the user.
\end{definition}

Using propensities, the IPS-based unbiased estimator of the ideal loss can be expressed as:
\begin{equation}
\label{equ:unbiased}
\mathcal{L}_{\mathrm{unbiased}}(\hat{\mathbf{R}}\!\mid\!\mathbf{\Theta})
= \frac{1}{|\mathcal{U}|}
\sum_{u\in\mathcal{U}}\sum_{i\in\mathcal{I}}
\frac{y_{ui}}{\theta_{ui}}\,\delta_{ui}^{(1)}
+ \left(1-\frac{y_{ui}}{\theta_{ui}}\right)\delta_{ui}^{(0)}.
\end{equation}

Intuitively, IPS compensates for popularity-induced exposure bias by reweighting observed interactions during training. Nevertheless, this estimator is inherently pointwise and treats each item independently, which may limit its ability to optimize ranking quality and capture users’ global preference structures. We address this limitation in the following sections.

\section{Methodology}
\label{sec:model}

\begin{figure}[!t]
    \centering
    \includegraphics[width=\linewidth]{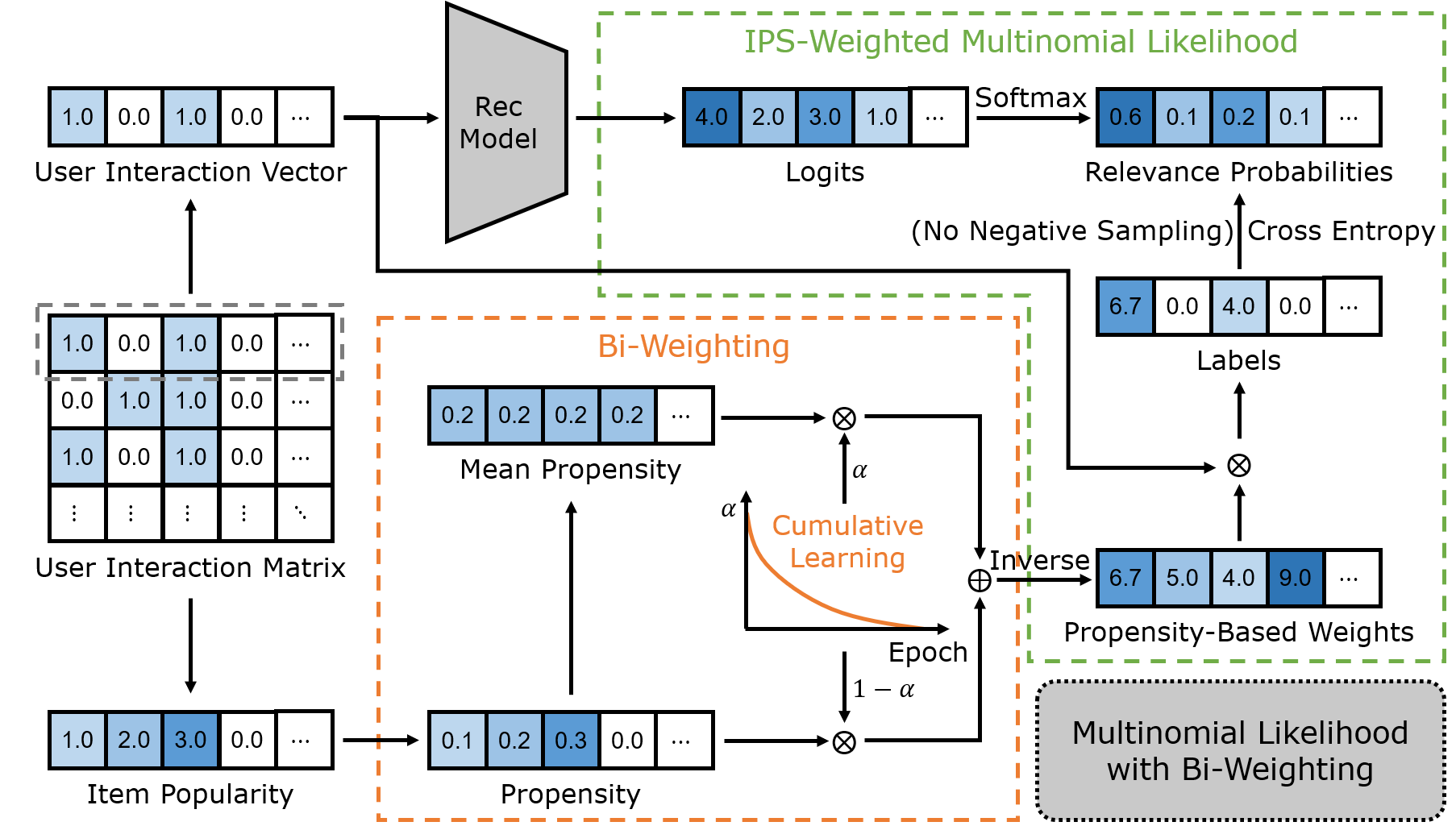}
    \caption{Overview of the proposed Mult-BiW framework. It comprises three key components: (a) IPS-weighted multinomial likelihood for listwise unbiased preference estimation, (b) bi-weighting for improved propensity modeling, and (c) cumulative learning for progressive debiasing.}
    \label{fig:framework}
\end{figure}

To address the limitations of existing popularity debiasing methods, we propose a novel approach termed \textit{Multinomial Likelihood with Bi-Weighting} (Mult-BiW). As illustrated in Figure~\ref{fig:framework}, Mult-BiW consists of three key components: (a) an IPS-weighted multinomial likelihood for listwise unbiased preference learning, (b) a Bi-Weighting strategy for robust propensity estimation, and (c) a cumulative learning scheme for progressive popularity debiasing.

\subsection{IPS-Weighted Multinomial Likelihood}
For each user $u$, we assume that the relevance vector over the item set, denoted by $\mathbf{r}_u$, is generated from a multinomial distribution \cite{liang2018variational}:
\begin{equation}
\mathbf{r}_u \sim \mathrm{Mult}(N_u, \tilde{\mathbf{r}}_u),
\end{equation}
where $N_u=\sum_i r_{ui}$ represents the number of relevant items for user $u$, and $\tilde{\mathbf{r}}_u=\mathrm{softmax}(\hat{\mathbf{r}}_u)$ denotes the predicted relevance distribution obtained by normalizing the model outputs $\hat{\mathbf{r}}_u$ via the softmax function.

The proposed framework is model-agnostic: the relevance scores $\hat{\mathbf{r}}_u$ can be produced by a wide range of recommendation models, including matrix factorization \cite{koren2009matrix}, autoencoder-based methods \cite{liang2018variational}, and graph convolutional networks \cite{he2020lightgcn}. Conditioned on the predicted relevance distribution, the log-likelihood for user $u$ is given by:
\begin{equation}
\log P(\mathbf{r}_u \!\mid\! \hat{\mathbf{r}}_u)
= \sum_{i\in\mathcal{I}} r_{ui} \log \tilde{r}_{ui}.
\end{equation}

Since the marginal relevance probability is defined as $\gamma_{ui}=P(r_{ui}=1)$, the ideal loss function based on the negative multinomial log-likelihood can be expressed as:
\begin{equation}
\mathcal{L}_{\mathrm{Mult}}(\hat{\mathbf{R}})
= -\frac{1}{|\mathcal{U}|}
\sum_{u\in\mathcal{U}}\sum_{i\in\mathcal{I}}
\gamma_{ui} \log \tilde{r}_{ui}.
\end{equation}

To incorporate the multinomial likelihood into the IPS framework introduced in Equation~(\ref{equ:unbiased}), we define the local loss as the negative multinomial log-likelihood, i.e.,
$\delta_{ui}^{(r_{ui})} = - r_{ui}\log \tilde{r}_{ui}$.
Accordingly, $\delta_{ui}^{(1)} = -\log \tilde{r}_{ui}$ and $\delta_{ui}^{(0)} = 0$, which yields the proposed listwise unbiased estimator, referred to as the Mult-IPS estimator:
\begin{equation}
\label{equ:Mult-IPS}
\mathcal{L}_{\mathrm{Mult\mbox{-}IPS}}(\hat{\mathbf{R}}\!\mid\!\mathbf{\Theta})
= -\frac{1}{|\mathcal{U}|}
\sum_{u\in\mathcal{U}}\sum_{i\in\mathcal{I}}
\frac{y_{ui}}{\theta_{ui}} \log \tilde{r}_{ui}.
\end{equation}

\textit{Remark.}
We next discuss the advantages of Mult-IPS in comparison with existing IPS-based debiasing methods. Alternative likelihood functions have been widely adopted in recommender systems, such as the Gaussian likelihood \cite{mnih2008probabilistic,koren2009matrix}:
\begin{equation}
\log P(\mathbf{r}_u \!\mid\! \hat{\mathbf{r}}_u)
= -\frac{1}{2} \sum_{i\in\mathcal{I}} (r_{ui}-\hat{r}_{ui})^2,
\end{equation}
and the logistic likelihood \cite{johnson2014logistic,he2017neural}:
\begin{equation}
\log P(\mathbf{r}_u \!\mid\! \hat{\mathbf{r}}_u)
= \sum_{i\in\mathcal{I}}
r_{ui}\log\sigma(\hat{r}_{ui}) + (1-r_{ui})\log\bigl(1-\sigma(\hat{r}_{ui})\bigr),
\end{equation}
where $\sigma(\cdot)$ denotes the sigmoid function.

The multinomial likelihood offers two key advantages in the IPS setting. First, due to the probability simplex constraint imposed by the softmax function, the multinomial likelihood rewards the model for allocating higher probability mass to relevant items under a fixed budget \cite{liang2018variational}. As a result, it naturally acts as a listwise loss that induces competition among all items and captures global ranking information for each user. In contrast, Gaussian and logistic likelihoods are inherently pointwise, focusing on individual item predictions. Although unbiased pairwise objectives such as UBPR \cite{saito2020unbiased2} have been proposed, they only model relative preferences between item pairs and are insufficient for learning users’ global preference distributions \cite{chen2020efficient2}.

Second, the IPS estimator in Equation~(\ref{equ:unbiased}) assigns both positive and negative weights to observed interactions. When Gaussian or logistic likelihoods are used, the negative loss term $\delta_{ui}^{(0)}$ is non-zero. Combined with IPS reweighting, this may lead to large negative values caused by $1 - \frac{y_{ui}}{\theta_{ui}} < 0$ for $y_{ui}=1$, resulting in high variance and numerical instability during training \cite{saito2020unbiased2}. In contrast, the multinomial likelihood yields $\delta_{ui}^{(0)}=0$, thereby avoiding large negative contributions and leading to a more stable optimization process.

While softmax-based losses have been explored in recent debiasing studies \cite{zhang2023invariant,zhang2024robust}, our work explicitly investigates their suitability for IPS-based learning. Recent findings further suggest that the softmax loss explicitly accounts for uncertainty in the underlying distributions, enabling greater robustness to distributional shifts \cite{wu2024bsl}. This property is particularly beneficial in IPS-based debiasing, where mismatches between training and testing distributions are inevitable. Moreover, the softmax formulation implicitly regularizes model predictions by penalizing excessive variance, thereby promoting more balanced and equitable recommendation outcomes \cite{wu2024effectiveness}.

\subsection{Bi-Weighting}

We begin by analyzing the IPS estimator with the multinomial likelihood to establish a theoretical foundation for the proposed approach. Proposition~\ref{proposition1} shows that the Mult-IPS estimator is unbiased with respect to the ideal loss defined over the complete user-item relevance matrix.

\begin{proposition}[Unbiasedness of Mult-IPS Estimator]
\label{proposition1}
The Mult-IPS estimator is unbiased with respect to the multinomial loss:
\begin{equation}
\mathbb{E}\!\left[\mathcal{L}_{\mathrm{Mult\mbox{-}IPS}}(\hat{\mathbf{R}}\!\mid\!\mathbf{\Theta})\right]
= \mathcal{L}_{\mathrm{Mult}}(\hat{\mathbf{R}}).
\end{equation}
\end{proposition}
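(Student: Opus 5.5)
The argument is a term-by-term application of linearity of expectation to Equation~(\ref{equ:Mult-IPS}). We treat the model outputs $\hat{\mathbf{R}}$, and hence $\tilde{r}_{ui}=\mathrm{softmax}(\hat{\mathbf{r}}_u)_i$, as fixed with respect to the randomness in the observed interactions $\mathbf{Y}$. We also treat the propensities $\mathbf{\Theta}$ as fixed. Under this convention, $\log\tilde{r}_{ui}$ and $\theta_{ui}$ are deterministic weights, and
\begin{equation*}
\mathbb{E}\!\left[\mathcal{L}_{\mathrm{Mult\mbox{-}IPS}}(\hat{\mathbf{R}}\!\mid\!\mathbf{\Theta})\right]
= -\frac{1}{|\mathcal{U}|}\sum_{u\in\mathcal{U}}\sum_{i\in\mathcal{I}} \frac{\mathbb{E}[y_{ui}]}{\theta_{ui}}\log\tilde{r}_{ui}.
\end{equation*}
It therefore suffices to show that $\mathbb{E}[y_{ui}]=\theta_{ui}\gamma_{ui}$ for every pair $(u,i)$.

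To compute $\mathbb{E}[y_{ui}]=P(y_{ui}=1)$, we use the standard exposure assumption underlying the IPS framework of \cite{saito2020unbiased}: an interaction occurs only if the item is relevant. Formally, $P(y_{ui}=1\mid r_{ui}=0)=0$, equivalently $y_{ui}=o_{ui}r_{ui}$ with $o_{ui}$ the exposure indicator. The law of total probability then gives
\begin{equation*}
P(y_{ui}=1)=P(y_{ui}=1\mid r_{ui}=1)\,P(r_{ui}=1)=\theta_{ui}\gamma_{ui},
\end{equation*}
using the Definition of propensity and $\gamma_{ui}=P(r_{ui}=1)$.

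Substituting this into the display, $\theta_{ui}$ cancels whenever $\theta_{ui}>0$. We assume positivity, i.e.\ every relevant pair has a nonzero chance of being observed, which is needed for the estimator to be well defined at all. What remains is exactly $-\frac{1}{|\mathcal{U}|}\sum_{u}\sum_{i}\gamma_{ui}\log\tilde{r}_{ui}=\mathcal{L}_{\mathrm{Mult}}(\hat{\mathbf{R}})$, which proves the claim. Equivalently, one may note that the Mult-IPS estimator is the special case of Equation~(\ref{equ:unbiased}) with $\delta^{(0)}_{ui}=0$, and repeat the same computation there.

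The only real obstacle is making the implicit modelling assumptions explicit, since the paper does not state them formally. These are: (i) $y_{ui}=1$ implies $r_{ui}=1$; (ii) $\theta_{ui}>0$; and (iii) $\hat{\mathbf{R}}$ does not depend on the realization of $\mathbf{Y}$. Assumption (iii) means the expectation is taken for a fixed model, not for the trained model. We would state these three assumptions at the start of the proof; once they are in place, the computation is immediate.
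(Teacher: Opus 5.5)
Your proposal is correct and follows the paper's proof: linearity of expectation, then $\mathbb{E}[y_{ui}]=\theta_{ui}\gamma_{ui}$ so that $\theta_{ui}$ cancels. The paper takes that second step silently in its third line. Your explicit assumptions are exactly what that step relies on and are worth stating: $P(y_{ui}=1\mid r_{ui}=0)=0$, $\theta_{ui}>0$, and $\hat{\mathbf{R}}$ fixed with respect to $\mathbf{Y}$.
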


\begin{proof}
\begin{equation}
\label{eqn:proposition1}
\begin{aligned}
\mathbb{E}\!\left[\mathcal{L}_{\mathrm{Mult\mbox{-}IPS}}(\hat{\mathbf{R}}\!\mid\!\mathbf{\Theta})\right]
&= \mathbb{E}\!\left[-\frac{1}{|\mathcal{U}|}\sum_{u\in\mathcal{U}}\sum_{i\in\mathcal{I}}
\frac{y_{ui}}{\theta_{ui}}\log\tilde{r}_{ui}\right] \\
&= -\frac{1}{|\mathcal{U}|}\sum_{u\in\mathcal{U}}\sum_{i\in\mathcal{I}}
\frac{\mathbb{E}[y_{ui}]}{\theta_{ui}}\log\tilde{r}_{ui} \\
&= -\frac{1}{|\mathcal{U}|}\sum_{u\in\mathcal{U}}\sum_{i\in\mathcal{I}}
\gamma_{ui}\log\tilde{r}_{ui} \\
&= \mathcal{L}_{\mathrm{Mult}}(\hat{\mathbf{R}}).
\end{aligned}
\end{equation}
\end{proof}

In practice, true propensities $\mathbf{\Theta}$ are typically unknown and must be estimated, resulting in the use of an empirical loss, denoted by
$\hat{\mathcal{L}}_{\mathrm{Mult\mbox{-}IPS}}(\hat{\mathbf{R}}\!\mid\!\hat{\mathbf{\Theta}})$.
Lemma~\ref{lemma1} characterizes the bias introduced by inaccurate propensity estimation.

\begin{lemma}[Bias with Inaccurate Propensities]
\label{lemma1}
Let $\mathbf{\Theta}$ denote the true propensities and $\hat{\mathbf{\Theta}}$ their estimates, with $\hat{\theta}_{ui}>0$ for all $(u,i)$. The bias of the Mult-IPS estimator is given by:
\begin{equation}
\begin{aligned}
\Delta\!\left[\hat{\mathcal{L}}_{\mathrm{Mult\mbox{-}IPS}}(\hat{\mathbf{R}}\!\mid\!\hat{\mathbf{\Theta}})\right]
&= \left| \mathcal{L}_{\mathrm{Mult}}(\hat{\mathbf{R}})
- \mathbb{E}\!\left[\hat{\mathcal{L}}_{\mathrm{Mult\mbox{-}IPS}}(\hat{\mathbf{R}}\!\mid\!\hat{\mathbf{\Theta}})\right] \right| \\
&= \frac{1}{|\mathcal{U}|}
\left| \sum_{u\in\mathcal{U}}\sum_{i\in\mathcal{I}}
\left(\gamma_{ui}-\frac{\mathbb{E}\left[y_{ui}\right]}{\hat{\theta}_{ui}}\right)
\log\tilde{r}_{ui} \right|\\
&= \frac{1}{|\mathcal{U}|}
\left| \sum_{u\in\mathcal{U}}\sum_{i\in\mathcal{I}}
\left(1-\frac{\theta_{ui}}{\hat{\theta}_{ui}}\right)
\gamma_{ui}\log\tilde{r}_{ui} \right|.
\end{aligned}
\end{equation}
\end{lemma}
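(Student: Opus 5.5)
The plan mirrors the computation in the proof of Proposition~\ref{proposition1}, with the true propensities $\theta_{ui}$ replaced by the estimates $\hat{\theta}_{ui}$. By definition, the empirical loss is
\[
\hat{\mathcal{L}}_{\mathrm{Mult\mbox{-}IPS}}(\hat{\mathbf{R}}\!\mid\!\hat{\mathbf{\Theta}}) = -\frac{1}{|\mathcal{U}|}\sum_{u}\sum_{i}\frac{y_{ui}}{\hat{\theta}_{ui}}\log\tilde{r}_{ui}.
\]
The randomness lies only in the observation indicators $y_{ui}$. The quantities $\hat{\theta}_{ui}>0$ and $\tilde{r}_{ui}$ are treated as fixed, deterministic given the model and the propensity estimator. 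Under this convention, linearity of expectation lets us take $\mathbb{E}$ inside the double sum:
\[
\mathbb{E}[\hat{\mathcal{L}}_{\mathrm{Mult\mbox{-}IPS}}] = -\frac{1}{|\mathcal{U}|}\sum_{u,i}\frac{\mathbb{E}[y_{ui}]}{\hat{\theta}_{ui}}\log\tilde{r}_{ui}.
\]
Subtracting this from $\mathcal{L}_{\mathrm{Mult}}(\hat{\mathbf{R}}) = -\frac{1}{|\mathcal{U}|}\sum_{u,i}\gamma_{ui}\log\tilde{r}_{ui}$ and taking absolute values gives the first displayed equality. The overall sign is irrelevant inside $|\cdot|$, so the leading minus can be dropped.

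For the second equality we need $\mathbb{E}[y_{ui}] = \theta_{ui}\gamma_{ui}$. This is the same identity used implicitly in Proposition~\ref{proposition1}, where $\mathbb{E}[y_{ui}]/\theta_{ui} = \gamma_{ui}$. To justify it:
\begin{itemize}
\item Write $\mathbb{E}[y_{ui}] = P(y_{ui}=1)$.
\item Decompose by the law of total probability: $P(y_{ui}=1\mid r_{ui}=1)P(r_{ui}=1) + P(y_{ui}=1\mid r_{ui}=0)P(r_{ui}=0)$.
\item The second term vanishes under the standard IPS assumption that an interaction requires relevance (an irrelevant item is not clicked), i.e.\ $P(y_{ui}=1\mid r_{ui}=0)=0$.
\item The remaining term equals $\theta_{ui}\gamma_{ui}$ by the definition of propensity and of $\gamma_{ui}$.
\end{itemize}
Substituting, each summand becomes $\bigl(\gamma_{ui} - \theta_{ui}\gamma_{ui}/\hat{\theta}_{ui}\bigr)\log\tilde{r}_{ui} = \bigl(1-\theta_{ui}/\hat{\theta}_{ui}\bigr)\gamma_{ui}\log\tilde{r}_{ui}$, which is the final expression.

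The computation is routine. The one step needing care is making the modeling assumptions explicit:
\begin{itemize}
\item $y_{ui} = 1$ implies $r_{ui} = 1$, so that $\mathbb{E}[y_{ui}] = \theta_{ui}\gamma_{ui}$.
\item $\hat{\theta}_{ui}$ does not depend on the realized $y_{ui}$ in the expectation (or is conditioned upon), so that it can be pulled out of $\mathbb{E}$.
\end{itemize}
The paper's setup takes both for granted, as Proposition~\ref{proposition1} does, so I will state them as standing assumptions and note that $\hat{\theta}_{ui}>0$ keeps every term well-defined.
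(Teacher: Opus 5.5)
Your proposal is correct and is the same computation the paper packs into the lemma's chain of equalities: linearity of expectation with $\hat{\theta}_{ui}$ and $\tilde{r}_{ui}$ held fixed, then the identity $\mathbb{E}[y_{ui}]=\theta_{ui}\gamma_{ui}$ already used in Proposition~\ref{proposition1}. You also state the two assumptions the paper leaves implicit, namely $P(y_{ui}=1\mid r_{ui}=0)=0$ and that $\hat{\theta}_{ui}$ is non-random; the second matters because the paper's popularity estimator $\hat{\theta}_{ui}=p_i/\max_j p_j$ is a function of $\mathbf{Y}$ and so only satisfies it if treated as fixed or conditioned upon.
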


Lemma~\ref{lemma1} indicates that the bias of the empirical Mult-IPS estimator is governed by the discrepancy between the true and estimated propensities. This observation motivates improving propensity estimation to reduce empirical bias. 

In the absence of side information, existing methods typically estimate propensities using relative item popularity, $\hat{\theta}_{ui} = (p_i/\max_j p_j)^\beta$, where $p_i=\sum_{u}y_{ui}$ denotes the popularity of item $i$ and $\beta$ is usually set to $1$ \cite{yang2018unbiased,saito2020unbiased}. However, such estimators systematically underestimate the propensities of tail items \cite{saito2020asymmetric}, since true exposure probabilities may be influenced by factors beyond popularity, such as social effects or user exploration.

To improve robustness, we propose a \textit{Bi-Weighting} (BiW) strategy that smooths individual propensities using a collection-level prior. Inspired by \textit{Jelinek--Mercer} smoothing in language modeling \cite{zhai2004study}, we linearly interpolate between the estimated propensities and a collection model:
\begin{equation}
\hat{\theta}_{ui}^{\mathrm{BiW}}
= (1-\alpha)\hat{\theta}_{ui} + \alpha C,
\end{equation}
where $C$ denotes a collection-level propensity estimate and $\alpha\in[0,1]$ controls the degree of smoothing. The role of $C$ is to provide a global prior that attenuates overly small propensities for tail items while moderating large propensities for popular items.

Proposition~\ref{proposition2} shows that incorporating the collection model can reduce the empirical bias of Mult-IPS.

\begin{proposition}[Bias Reduction with BiW]
\label{proposition2}
Let $\hat{\mathbf{\Theta}}^{\mathrm{BiW}^*}$ denote the BiW propensities obtained by optimizing $\alpha$ on an unbiased validation set. Then,
\begin{equation}
\Delta\!\left[\hat{\mathcal{L}}_{\mathrm{Mult\mbox{-}IPS}}(\hat{\mathbf{R}}\!\mid\!\hat{\mathbf{\Theta}}^{\mathrm{BiW}^*})\right]
\le
\Delta\!\left[\hat{\mathcal{L}}_{\mathrm{Mult\mbox{-}IPS}}(\hat{\mathbf{R}}\!\mid\!\hat{\mathbf{\Theta}})\right].
\end{equation}
\end{proposition}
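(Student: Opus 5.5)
The plan is to treat the bias in Lemma~\ref{lemma1} as a function of the smoothing parameter $\alpha$ and to notice that $\alpha=0$ is one of the admissible choices. Write
\[
g(\alpha)=\frac{1}{|\mathcal{U}|}\left|\sum_{u\in\mathcal{U}}\sum_{i\in\mathcal{I}}\left(1-\frac{\theta_{ui}}{(1-\alpha)\hat{\theta}_{ui}+\alpha C}\right)\gamma_{ui}\log\tilde{r}_{ui}\right|,
\]
which by Lemma~\ref{lemma1} equals $\Delta\bigl[\hat{\mathcal{L}}_{\mathrm{Mult\mbox{-}IPS}}(\hat{\mathbf{R}}\mid\hat{\mathbf{\Theta}}^{\mathrm{BiW}})\bigr]$ whenever the smoothed propensities are positive. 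Positivity holds for every $\alpha\in[0,1]$ provided $\hat{\theta}_{ui}>0$ and $C>0$, since the smoothed value is then a convex combination of positive numbers. At $\alpha=0$ the smoothed propensities reduce to $\hat{\mathbf{\Theta}}$, so $g(0)$ is exactly the right-hand side of Proposition~\ref{proposition2}.

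The key step is to make precise what ``optimizing $\alpha$ on an unbiased validation set'' means. I would formalize $\alpha^*$ as a minimizer of the bias over $[0,1]$, namely $\alpha^*\in\arg\min_{\alpha\in[0,1]} g(\alpha)$. The justification is that an unbiased validation set lets one evaluate, in expectation, the discrepancy between the ideal loss $\mathcal{L}_{\mathrm{Mult}}$ and the expected empirical loss. Existence of a minimizer follows because $g$ is continuous on the compact interval $[0,1]$: each summand is a rational function of $\alpha$ whose denominator is bounded away from zero. Once $\alpha^*$ is defined this way, the inequality is immediate, since $g(\alpha^*)=\min_{\alpha\in[0,1]} g(\alpha)\le g(0)$.

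The main obstacle is this identification step, not the inequality. In practice $\alpha$ is tuned on a validation metric such as NDCG, not on $\Delta$ itself. I would therefore state explicitly the modeling assumption that the validation criterion is (a monotone proxy for) the bias, or equivalently that the search space for $\alpha$ includes $0$ and the selection rule never does worse than the unsmoothed baseline. I would also briefly note when the inequality is strict. The derivative $g'(0)$ equals, up to sign, $\frac{1}{|\mathcal{U}|}\sum_{u,i}\theta_{ui}(C-\hat{\theta}_{ui})\hat{\theta}_{ui}^{-2}\gamma_{ui}\log\tilde{r}_{ui}$. Whenever this derivative is nonzero and has the appropriate sign relative to the sign of the inner sum, a small $\alpha>0$ strictly reduces the bias. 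This happens in particular when tail items have underestimated $\hat{\theta}_{ui}<C$, which motivates the BiW design.
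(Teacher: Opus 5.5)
Your proposal is correct and follows the paper's argument: the paper likewise identifies $\alpha^*$ with the minimizer over $\alpha$ of the bias expression from Lemma~\ref{lemma1}, and concludes because $\alpha=0$ is feasible and recovers $\hat{\mathbf{\Theta}}$. Your extra remarks make precise what the paper's first equality takes for granted: existence of the minimizer by continuity on $[0,1]$, positivity of the smoothed propensities, and the explicit caveat that equating validation-based tuning of $\alpha$ with minimization of $\Delta$ is a modeling assumption.
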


\begin{proof}
    \begin{equation}
    \label{eqn:proposition2}
        \begin{aligned}
            \Delta\!\left[\hat{\mathcal{L}}_{\mathrm{Mult\mbox{-}IPS}}(\hat{\mathbf{R}}\!\mid\!\hat{\mathbf{\Theta}}^{\mathrm{BiW}^*})\right]
            &= \min_{\alpha} \Delta\!\left[\hat{\mathcal{L}}_{\mathrm{Mult\mbox{-}IPS}}(\hat{\mathbf{R}}\!\mid\!\hat{\mathbf{\Theta}}^{\mathrm{BiW}})\right]\\
            &= \min_{\alpha} \frac{1}{|\mathcal{U}|} \left| \sum_{u \in \mathcal{U}}\sum_{i \in \mathcal{I}} \left( 1 - \frac{\theta_{ui}}{\hat{\theta}_{ui}^{\mathrm{BiW}}} \right) \gamma_{ui}\log \tilde{r}_{ui} \right|\\
            &\leq \frac{1}{|\mathcal{U}|} \left| \sum_{u \in \mathcal{U}}\sum_{i \in \mathcal{I}} \left( 1 - \frac{\theta_{ui}}{\hat{\theta}_{ui}} \right) \gamma_{ui}\log \tilde{r}_{ui} \right|\\
            &= \Delta\!\left[\hat{\mathcal{L}}_{\mathrm{Mult\mbox{-}IPS}}(\hat{\mathbf{R}}\!\mid\!\hat{\mathbf{\Theta}})\right].
        \end{aligned}
    \end{equation}
    The result follows directly from the fact that $\alpha=0$ is a feasible solution and that $\alpha$ is optimized to minimize the empirical bias.
\end{proof}

While BiW offers bias reduction, determining an appropriate collection model $C$ is nontrivial. Directly minimizing the bias expression in Lemma~\ref{lemma1} is intractable and depends on $\alpha$. Instead, we derive an upper bound on the bias and optimize $C$ with respect to this bound.

\begin{lemma}[Bias Upper Bound]
Assume that $0 < -\gamma_{ui}\log\tilde{r}_{ui} \le \delta$ and $\hat{\theta}_{ui}^{\mathrm{BiW}} \ge m$. Then,
\begin{equation}
\Delta\!\left[\hat{\mathcal{L}}_{\mathrm{Mult\mbox{-}IPS}}(\hat{\mathbf{R}}\!\mid\!\hat{\mathbf{\Theta}}^{\mathrm{BiW}})\right]
\le
\frac{\delta\sqrt{|\mathcal{I}|}}{m\sqrt{|\mathcal{U}|}}
\sqrt{\sum_{u\in\mathcal{U}}\sum_{i\in\mathcal{I}}
(\hat{\theta}_{ui}^{\mathrm{BiW}}-\theta_{ui})^2}.
\end{equation}
\end{lemma}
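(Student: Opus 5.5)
Starting from the exact bias expression of Lemma~\ref{lemma1}, instantiated with $\hat{\mathbf{\Theta}}^{\mathrm{BiW}}$, we have
\begin{equation*}
\Delta\!\left[\hat{\mathcal{L}}_{\mathrm{Mult\mbox{-}IPS}}(\hat{\mathbf{R}}\!\mid\!\hat{\mathbf{\Theta}}^{\mathrm{BiW}})\right]
= \frac{1}{|\mathcal{U}|}\left|\sum_{u\in\mathcal{U}}\sum_{i\in\mathcal{I}}\frac{\hat{\theta}_{ui}^{\mathrm{BiW}}-\theta_{ui}}{\hat{\theta}_{ui}^{\mathrm{BiW}}}\,\gamma_{ui}\log\tilde{r}_{ui}\right|.
\end{equation*}
This uses the identity $1-\theta_{ui}/\hat{\theta}_{ui}^{\mathrm{BiW}} = (\hat{\theta}_{ui}^{\mathrm{BiW}}-\theta_{ui})/\hat{\theta}_{ui}^{\mathrm{BiW}}$.

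The plan has three steps.
\begin{itemize}
\item[1.] Apply the triangle inequality to move the absolute value inside the double sum.
\item[2.] Bound each factor separately. By assumption, $|\gamma_{ui}\log\tilde{r}_{ui}| = -\gamma_{ui}\log\tilde{r}_{ui}\le\delta$. Also $1/\hat{\theta}_{ui}^{\mathrm{BiW}}\le 1/m$. Together these give
\begin{equation*}
\Delta \le \frac{\delta}{m|\mathcal{U}|}\sum_{u\in\mathcal{U}}\sum_{i\in\mathcal{I}}\left|\hat{\theta}_{ui}^{\mathrm{BiW}}-\theta_{ui}\right|.
\end{equation*}
\item[3.] Convert the $\ell_1$ norm to an $\ell_2$ norm. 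Apply Cauchy--Schwarz to the sum of $|\mathcal{U}||\mathcal{I}|$ terms, pairing each term with the constant $1$:
\begin{equation*}
\sum_{u,i}\left|\hat{\theta}_{ui}^{\mathrm{BiW}}-\theta_{ui}\right|\le\sqrt{|\mathcal{U}||\mathcal{I}|}\sqrt{\sum_{u,i}(\hat{\theta}_{ui}^{\mathrm{BiW}}-\theta_{ui})^2}.
\end{equation*}
\end{itemize}

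Substituting step 3 into step 2, the prefactor simplifies as
\begin{equation*}
\frac{\delta\sqrt{|\mathcal{U}||\mathcal{I}|}}{m|\mathcal{U}|}=\frac{\delta\sqrt{|\mathcal{I}|}}{m\sqrt{|\mathcal{U}|}},
\end{equation*}
which is exactly the claimed bound.

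There is no real obstacle here; the argument is elementary. The only point needing care is the sign convention. The assumption fixes the sign of $-\gamma_{ui}\log\tilde{r}_{ui}$ as positive, so the magnitude bound by $\delta$ holds without any case split. Likewise, $m>0$ guarantees that $1/\hat{\theta}_{ui}^{\mathrm{BiW}}\le 1/m$ is legitimate.
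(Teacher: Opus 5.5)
Your proposal is correct and follows the paper's proof step for step: the triangle inequality, then the termwise bound by $\delta/m$ to reach the $\ell_1$ sum, then Cauchy--Schwarz over the $|\mathcal{U}||\mathcal{I}|$ terms. Your remark that $m>0$ is needed for $1/\hat{\theta}_{ui}^{\mathrm{BiW}}\le 1/m$ makes explicit an assumption the paper leaves implicit.
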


\begin{proof}
    \begin{equation}
        \begin{aligned}
            \Delta\!\left[\hat{\mathcal{L}}_{\mathrm{Mult\mbox{-}IPS}}(\hat{\mathbf{R}}\!\mid\!\hat{\mathbf{\Theta}}^{\mathrm{BiW}})\right]
            &= \frac{1}{|\mathcal{U}|} \left| \sum_{u \in \mathcal{U}}\sum_{i \in \mathcal{I}} \left( 1 - \frac{\theta_{ui}}{\hat{\theta}_{ui}^{\mathrm{BiW}}} \right) \gamma_{ui}\log \tilde{r}_{ui} \right|\\
            &\leq \frac{1}{|\mathcal{U}|} \sum_{u \in \mathcal{U}}\sum_{i \in \mathcal{I}} \left| \left( 1 - \frac{\theta_{ui}}{\hat{\theta}_{ui}^{\mathrm{BiW}}} \right) \gamma_{ui}\log \tilde{r}_{ui} \right|\\
            &\leq \frac{\delta}{m |\mathcal{U}|} \sum_{u \in \mathcal{U}}\sum_{i \in \mathcal{I}} \left| \hat{\theta}_{ui}^{\mathrm{BiW}} - \theta_{ui} \right|\\
            &\leq \frac{\delta \sqrt{|\mathcal{I}|}}{m \sqrt{|\mathcal{U}|}} \sqrt{\sum_{u \in \mathcal{U}}\sum_{i \in \mathcal{I}} (\hat{\theta}_{ui}^{\mathrm{BiW}} - \theta_{ui})^2}.
        \end{aligned}
    \end{equation}
    The last inequality follows from the application of the \textit{Cauchy--Schwarz} inequality.
\end{proof}

Minimizing this upper bound leads to the optimal value of $C$, as presented in Theorem~\ref{theorem}.

\begin{theorem}[Optimal Collection Model]
\label{theorem}
The optimal collection model for BiW is given by:
\begin{equation}
C^* = \frac{\sum_{i\in\mathcal{I}}\hat{\theta}_{ui}}{|\mathcal{I}|}.
\end{equation}
\end{theorem}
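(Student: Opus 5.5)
We minimize the Bias Upper Bound lemma's right-hand side over $C$. Treat $\delta$ and $m$ as fixed constants; in particular, $m$ is a lower bound that does not depend on $C$, for instance $\min_{u,i}\hat\theta_{ui}$ when $C$ lies in the range of the estimates. The problem then reduces to minimizing the squared error
\begin{equation*}
S(C)=\sum_{u\in\mathcal{U}}\sum_{i\in\mathcal{I}}\bigl((1-\alpha)\hat{\theta}_{ui}+\alpha C-\theta_{ui}\bigr)^2 .
\end{equation*}
The true propensities $\theta_{ui}$ are unobservable, so $S$ cannot be minimized directly. We therefore need a surrogate for $\theta_{ui}$ built from the available information. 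The natural choice, consistent with the no-side-information setting, is to use the estimate $\hat\theta_{ui}$ itself as the unbiased proxy, i.e.\ to assume $\mathbb{E}[\hat\theta_{ui}]=\theta_{ui}$ or to replace $\theta_{ui}$ by $\hat\theta_{ui}$ in the bound.

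After this substitution, each summand becomes $\alpha^2 (C-\hat\theta_{ui})^2$. The factor $\alpha^2$ does not affect the minimizer, so for each user $u$ the problem is
\begin{equation*}
\min_C \sum_{i\in\mathcal{I}} (C-\hat\theta_{ui})^2 .
\end{equation*}
The objective is a strictly convex quadratic in $C$. Setting its derivative $2\sum_i (C-\hat\theta_{ui})$ to zero gives the sample mean $C^*=\frac{1}{|\mathcal{I}|}\sum_{i}\hat\theta_{ui}$, which matches the theorem. Strict convexity, via the second derivative $2|\mathcal{I}|>0$, confirms that this point is the unique minimizer. Since the collection model is written with a $u$ index, we optimize $C$ per user, and the decomposition over $u$ makes this legitimate.

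The main obstacle is justifying that $\theta_{ui}$ may be replaced by $\hat\theta_{ui}$ while $C$ is still chosen to control the true bias. A cleaner alternative is to view $\theta_{ui}$ as random around $\hat\theta_{ui}$ and minimize the expected squared error. The cross term then vanishes under the unbiasedness assumption, and a $C$-independent variance term separates out, leaving the same objective.

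A second, minor point is checking that the lower bound $m$ is unaffected by the choice of $C$. This holds because $C^*$ is a mean of the $\hat\theta_{ui}$ and so $C^*\ge\min_i\hat\theta_{ui}$. Consequently $\hat\theta^{\mathrm{BiW}}_{ui}\ge\min_i\hat\theta_{ui}$, and the bound constant stays valid.
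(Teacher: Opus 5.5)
Your argument is correct at the paper's level of rigor, but you remove the unknown $\theta_{ui}$ in the opposite order from the paper.

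The paper first minimizes the realized squared error exactly, which gives the oracle minimizer $C^*=\sum_{i}[\theta_{ui}-(1-\alpha)\hat{\theta}_{ui}]/(|\mathcal{I}|\alpha)$. Only then does it write $\theta_{ui}=\hat{\theta}_{ui}+e_{ui}$ with i.i.d.\ mean-zero errors and appeal to the law of large numbers, so that the residual $\sum_i e_{ui}/(|\mathcal{I}|\alpha)$ vanishes in probability.

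Your expected-loss version averages the noise out first. The cross term $-2\alpha(C-\hat{\theta}_{ui})\mathbb{E}[e_{ui}]$ drops, the objective becomes $\alpha^2\sum_i(C-\hat{\theta}_{ui})^2$ plus a constant, and the mean is the exact minimizer for any finite item set. This needs only conditional mean-zero errors with finite variance: no independence, no large-$|\mathcal{I}|$ limit, and no $1/\alpha$ factor on the residual. In exchange, the paper's route says something about the optimum of the realized bound for the actual $\theta$ (it is asymptotically near the mean), while yours concerns the expected bound.

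You also tidy two points the paper passes over. First, the paper's objective uses a single $C$ summed over all users, whose minimizer would also average over $u$, so the $u$-indexed formula really requires your per-user optimization. Second, $m$ must be valid for every candidate $C$, and your range argument secures this.

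Two wording fixes:
\begin{enumerate}
\item Make the expected-loss argument the proof. The bare substitution $\theta_{ui}:=\hat{\theta}_{ui}$ is only a heuristic.
\item Place the randomness on $\theta_{ui}$ as the paper does ($\theta_{ui}=\hat{\theta}_{ui}+e_{ui}$), not as $\mathbb{E}[\hat{\theta}_{ui}]=\theta_{ui}$ with deterministic $C$. Under the latter, the expected objective is $\alpha^2\sum_i(C-\theta_{ui})^2$ plus a constant, and its minimizer is the unknown mean of the true propensities.
\end{enumerate}
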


\begin{proof}
    Since
    \begin{equation}        
        \arg\min_{C} \frac{\delta |\mathcal{I}|}{m |\mathcal{U}|} \sqrt{\sum_{u \in \mathcal{U}}\sum_{i \in \mathcal{I}} (\hat{\theta}_{ui}^{\mathrm{BiW}} - \theta_{ui})^2}=\arg\min_{C} \sum_{u \in \mathcal{U}}\sum_{i \in \mathcal{I}} (\hat{\theta}_{ui}^{\mathrm{BiW}} - \theta_{ui})^2,
    \end{equation}
    by solving the above convex optimization problem, we have:
    \begin{equation}
    \label{eqn:optimum}
        C^* = \frac{\sum_{i \in \mathcal{I}}\left[ \theta_{ui} - (1 - \alpha) \hat{\theta}_{ui}\right]}{|\mathcal{I}|\alpha}.
    \end{equation}
    Note that the true propensity $\theta_{ui}$ in Equation~(\ref{eqn:optimum}) is unknown. To obtain a computable estimate of $C^*$, we assume $\theta_{ui}=\hat{\theta}_{ui}+e_{ui}$, where the error terms $e_{ui}$ are \textit{i.i.d.} with $\mathbb{E}(e_{ui}) = 0$. As the size of the item set (i.e., $|\mathcal{I}|$) is usually large, by the \emph{law of large numbers}, we have:
    \begin{equation}
        \begin{aligned}
            C^* 
            &= \frac{\sum_{i \in \mathcal{I}} \left[\hat{\theta}_{ui}+e_{ui}-(1 - \alpha) \hat{\theta}_{ui}\right]}{|\mathcal{I}|\alpha} \\
            &= \frac{\sum_{i \in \mathcal{I}} \hat{\theta}_{ui}}{|\mathcal{I}|} + \frac{\sum_{i \in \mathcal{I}} e_{ui}}{|\mathcal{I}|\alpha} \\
            &\rightarrow_p \frac{\sum_{i \in \mathcal{I}} \hat{\theta}_{ui}}{|\mathcal{I}|} + \frac{\mathbb{E}[e_{ui}]}{\alpha} \\
            &= \frac{\sum_{i \in \mathcal{I}} \hat{\theta}_{ui}}{|\mathcal{I}|}.
        \end{aligned}
    \end{equation}
\end{proof}

Accordingly, the BiW propensity estimator becomes:
\begin{equation}
\hat{\theta}_{ui}^{\mathrm{BiW}}
= (1-\alpha)\hat{\theta}_{ui}
+ \alpha \cdot \frac{\sum_{j\in\mathcal{I}}\hat{\theta}_{uj}}{|\mathcal{I}|}.
\end{equation}

Substituting this estimator into the Mult-IPS objective yields the proposed Mult-BiW estimator, which reweights user-item pairs using a combination of individual propensities and the average propensity:
\begin{equation}
\hat{\mathcal{L}}_{\mathrm{Mult\mbox{-}BiW}}(\hat{\mathbf{R}}\!\mid\!\hat{\mathbf{\Theta}})
=
-\frac{1}{|\mathcal{U}|}\sum_{u\in\mathcal{U}}\sum_{i\in\mathcal{I}}
\frac{y_{ui}}
{(1-\alpha)\hat{\theta}_{ui}
+ \alpha|\mathcal{I}|^{-1}\sum_{j\in\mathcal{I}}\hat{\theta}_{uj}}
\log\tilde{r}_{ui}.
\end{equation}

The Mult-BiW estimator acts as a generalized mixture model. As $\alpha \to 1$, Mult-BiW converges to a naive estimator assigning uniform weight to all items. Conversely, as $\alpha \to 0$, it recovers the standard Mult-IPS estimator. Overall, BiW performs global smoothing of propensities, increasing the robustness of tail-item estimates while suppressing the dominance of popular items \cite{zhai2004study}.

\subsection{Progressive Bi-Weighting}
The choice of the smoothing parameter $\alpha$ plays a critical role in the effectiveness of Bi-Weighting. Recent studies on class-imbalanced learning have shown that applying rebalancing strategies from the early stages of training may hinder representation learning, ultimately degrading model performance \cite{kang2019decoupling,zhou2020bbn,he2021rethinking}. Motivated by these findings, we adopt a \textit{cumulative learning} perspective that jointly accounts for representation learning and popularity debiasing.

Specifically, we propose a \textit{Progressive Bi-Weighting} (PBiW) strategy that gradually increases the influence of propensity-based reweighting across training epochs. In the early stages, the model focuses primarily on learning discriminative representations from the biased data distribution, and progressively shifts toward popularity debiasing as training proceeds. Formally, we schedule $\alpha$ as:
\begin{equation}
\alpha = 1 - \left(\frac{T}{T_{\max}}\right)^{\eta},
\end{equation}
where $T$ denotes the current training epoch, $T_{\max}$ is the total number of epochs, and $\eta$ controls the pace of the cumulative learning process \cite{zhou2020bbn}.

Intuitively, PBiW enables the model to dynamically balance representation learning and debiasing, leading to improved accuracy on tail items without sacrificing overall recommendation quality \cite{he2021rethinking}. The parameter $\eta$ can be adjusted to accommodate different backbone models, which may vary in their representational capacity and convergence behavior. We provide an empirical analysis of the impact of $\eta$ in Section~\ref{sec:eta}.

\subsection{Discussion}

\subsubsection{Comparison with Propensity Clipping}

A widely used technique for improving IPS is \textit{propensity clipping}, which mitigates the high variance of IPS estimators by enforcing a lower bound on propensity values \cite{gilotte2018offline,gruson2019offline,saito2020unbiased}. When combined with the multinomial likelihood, the clipped IPS objective (Mult-CIPS) can be written as:
\begin{equation}
\hat{\mathcal{L}}_{\mathrm{Mult\mbox{-}CIPS}}(\hat{\mathbf{R}}\!\mid\!\hat{\mathbf{\Theta}})
= -\frac{1}{|\mathcal{U}|}
\sum_{u\in\mathcal{U}}\sum_{i\in\mathcal{I}}
\frac{y_{ui}}{\max\{\hat{\theta}_{ui},M\}}
\log\tilde{r}_{ui},
\end{equation}
where $M\in(0,1]$ is a predefined clipping threshold.

By truncating small propensity values, clipping effectively reduces estimator variance \cite{saito2020unbiased}. However, its impact is confined to a limited subset of tail items determined by the hard threshold $M$. In contrast, the proposed Bi-Weighting strategy performs a soft and global adjustment across all items, simultaneously increasing the estimated propensities of tail items and moderating those of popular items. We empirically compare Mult-CIPS and Mult-BiW in Section~\ref{backbone}.

\subsubsection{Applicability of Doubly Robust Estimation}

Another important class of debiasing techniques is \textit{doubly robust} (DR) estimation, which combines propensity-based reweighting with outcome imputation to achieve unbiasedness even when one of the two components is misspecified. DR estimators have been extensively studied in the context of explicit feedback recommendation \cite{wang2019doubly,song2023cdr}. In these settings, the propensity typically corresponds to the probability that a user’s rating is observed, i.e., $P(o_{ui}=1)$, where $o_{ui}$ indicates a rating event. This formulation allows explicit modeling of both exposure and user preference signals (e.g., high ratings indicating relevance).

In contrast, our work focuses on implicit feedback scenarios, where only binary interaction signals are available and propensities are defined as $P(y_{ui}=1 \!\mid\! r_{ui}=1)$. The absence of explicit observation indicators $o_{ui}$ makes the direct application of DR estimators challenging. Moreover, extending Mult-BiW to a DR framework would require carefully designed strategies for imputing relevance probabilities $\gamma_{ui}^{\mathrm{imputation}}$ and interaction outcomes $y_{ui}^{\mathrm{imputation}}$. While promising, such extensions fall outside the scope of this paper and are left for future work.

\subsubsection{Model Complexity}
The proposed Mult-IPS and Mult-BiW methods are computationally efficient. The propensity weights are precomputed from item popularity statistics and reused throughout training, incurring only $\mathcal{O}(|\mathcal{I}|)$ additional storage, where $|\mathcal{I}|$ is the number of items.

The dominant computational cost arises from optimizing the multinomial likelihood. Specifically, the per-epoch training complexity of the proposed methods is $\mathcal{O}(|\mathcal{U}||\mathcal{I}|d)$, where $|\mathcal{U}|$ is the number of users and $d$ is the embedding dimension, as the likelihood is evaluated over the full item set for each user. In contrast, pointwise or pairwise objectives with negative sampling scale with the number of observed interactions $|\mathcal{D}|$, requiring $\mathcal{O}(|\mathcal{D}|(1\!+\!K)d)$ operations per epoch, where $K$ is the number of negative samples.

While pointwise or pairwise losses are asymptotically more efficient when $|\mathcal{D}|(1\!+\!K) \ll |\mathcal{U}||\mathcal{I}|$, the empirical runtime gap is often modest in practice due to modern hardware acceleration. In particular, the multinomial loss can be implemented using dense matrix multiplication followed by a softmax operation, both of which are highly optimized in contemporary GPU libraries and benefit substantially from large-batch computation. To further substantiate the efficiency of Mult-BiW, we report empirical comparisons of training time against representative baselines on all evaluated datasets in Section~\ref{efficiency}.

\section{Experiments}
\label{sec:exp}

\begin{table}[!t]
    \centering
    \caption{Summary of the evaluation datasets. The ML10M and Netflix datasets are adopted from \cite{zheng2021disentangling}.}
    \begin{tabular}{lcccccc}
        \toprule
        Statistics & Amazon & Google & Gowalla & Yelp & ML10M & Netflix \\
        \midrule
        \# Users & 39,240 & 28,943 & 43,457 & 36,948 & 37,962 & 32,450 \\
        \# Items & 18,536 & 8,035 & 15,294 & 20,950 & 4,819 & 8,432 \\
        \# Interactions & 882,123 & 306,807 & 727,163 & 1,299,619 & 1,371,473 & 2,212,690\\
        Density & 0.1213\% & 0.1319\% & 0.1094\% & 0.1679\% & 0.7497\% & 0.8087\% \\
        $\max p_{i}^\textrm{train}$ & 1,170 & 360 & 2,304 & 1,410 & 11,739 & 8,763 \\
        $\min p_{i}^\textrm{train}$ & 5 & 8 & 5 & 2 & 0 & 0 \\
        $p_{i}^\textrm{valid}$ & 5 & 4 & 5 & 6 & - & - \\
        $p_{i}^\textrm{test}$ & 10 & 8 & 10 & 12 & - & - \\
        \bottomrule
    \end{tabular}
    \label{tab:dataset}
\end{table}

In this section, we conduct extensive empirical studies on six real-world datasets to evaluate the effectiveness of the proposed Mult-BiW framework. The implementation and preprocessed datasets are publicly available on GitHub.\footnote{\url{https://github.com/zhuty16/Mult-BiW}.} The experiments are designed to answer the following research questions:

\begin{itemize}
    \item[\textbf{RQ1.}] How does Mult-BiW perform compared with state-of-the-art popularity debiasing methods?
    \item[\textbf{RQ2.}] What is the contribution of each component in Mult-BiW?
    \item[\textbf{RQ3.}] How sensitive is Mult-BiW to key hyperparameters?
    \item[\textbf{RQ4.}] Can Mult-BiW effectively mitigate popularity bias in recommendation policies?
    \item[\textbf{RQ5.}] How does Mult-BiW perform across item groups with varying popularity levels?
\end{itemize}

\subsection{Experimental Settings}

\subsubsection{Datasets}
We evaluate all methods on six widely used real-world datasets. The first four datasets are Amazon Books \cite{he2016ups},\footnote{\url{http://jmcauley.ucsd.edu/data/amazon/index\_2014.html}.} Google Local \cite{he2017translation},\footnote{\url{http://cseweb.ucsd.edu/~jmcauley/datasets.html}.} Gowalla \cite{cho2011friendship},\footnote{\url{http://snap.stanford.edu/data/loc-Gowalla.html}.} and Yelp Challenge 2019.\footnote{\url{https://www.yelp.com/dataset/}.} Following common practice \cite{he2020lightgcn}, we applied a $20$-core preprocessing strategy, retaining users and items with at least $20$ interactions.

To evaluate popularity debiasing under controlled conditions, we adopted the offline evaluation protocol proposed in prior work \cite{liang2016causal}. Specifically, we constructed a \emph{balanced test set} in which items appear with equal frequency, mimicking the outcome of a uniformly random recommendation policy. This evaluation setup is consistent with that used in causal and debiased recommendation studies \cite{liang2016causal,bonner2018causal,zheng2021disentangling,wei2021model}. A balanced validation set was constructed in the same manner, while the remaining interactions formed the imbalanced training set. The data were split into training, validation, and test sets with a ratio of $70\%/10\%/20\%$ \cite{zheng2021disentangling}. As a result, models were trained on long-tailed (biased) data and evaluated on balanced (unbiased) data.

In addition, we included two public preprocessed datasets, i.e., MovieLens-10M (ML10M) and Netflix, released with DICE~\cite{zheng2021disentangling},\footnote{\url{https://github.com/tsinghua-fib-lab/DICE/tree/main/data}.} which are identical to those used for evaluating DICE. Unlike the other datasets, the validation and test sets of ML10M and Netflix are not perfectly balanced, as they were constructed using random sampling with probability capping. Summary statistics of all datasets are reported in Table~\ref{tab:dataset}.

Finally, we also conducted experiments on the Coat \cite{schnabel2016recommendations} and Yahoo! R3 \cite{marlin2009collaborative} datasets, which naturally contain uniformly distributed test sets. Due to space constraints, these results are reported in the Appendix.

\begin{table*}[!t]
    \small
    \centering
    \caption{Performance comparison of the evaluated methods using the MF backbone across six datasets. Boldface indicates the best results, and underlining marks the second-best. Asterisk (*) denotes statistically significant improvements over the best baseline based on a two-sample t-test ($p<0.01$) with five repeated runs.}
    \begin{tabular}{llccccccccc}
        \toprule
        Dataset & Metric & Expo-MF & Rel-MF & UBPR & DICE & MACR & CPR & TTEN & Mult-BiW & Improv. \\
        \midrule
        \multirow{4}{*}{Amazon} 
            & R@10 & 0.0445 & 0.0470 & 0.0552 & 0.0516 & 0.0536 & \underline{0.0637} & 0.0635 & \textbf{0.0807}* & 26.60\% \\
            & N@10 & 0.0322 & 0.0363 & 0.0457 & 0.0433 & 0.0451 & 0.0544 & \underline{0.0559} & \textbf{0.0716}* & 28.19\% \\
            & R@20 & 0.0709 & 0.0742 & 0.0866 & 0.0807 & 0.0822 & \underline{0.0974} & 0.0929 & \textbf{0.1161}* & 19.21\% \\
            & N@20 & 0.0418 & 0.0460 & 0.0563 & 0.0531 & 0.0549 & \underline{0.0657} & 0.0656 & \textbf{0.0834}* & 26.96\% \\
        \midrule
        \multirow{4}{*}{Google} 
            & R@10 & 0.0765 & 0.1264 & 0.1314 & 0.1278 & 0.1299 & \textbf{0.1471} & 0.1370 & \underline{0.1441} & -1.99\% \\
            & N@10 & 0.0494 & 0.0821 & 0.0896 & 0.0869 & 0.0869 & \underline{0.0999} & 0.0965 & \textbf{0.1016} & 1.75\% \\
            & R@20 & 0.1161 & 0.1933 & 0.1980 & 0.1937 & 0.1958 & \textbf{0.2205} & 0.2020 & \underline{0.2122} & -3.76\% \\
            & N@20 & 0.0623 & 0.1036 & 0.1109 & 0.1079 & 0.1082 & \textbf{0.1235} & 0.1174 & \underline{0.1233} & -0.14\% \\
        \midrule
        \multirow{4}{*}{Gowalla} 
            & R@10 & 0.0712 & 0.0882 & 0.1011 & 0.0932 & 0.1040 & \underline{0.1100} & 0.1098 & \textbf{0.1288}* & 17.11\% \\
            & N@10 & 0.0450 & 0.0560 & 0.0705 & 0.0648 & 0.0727 & 0.0783 & \underline{0.0796} & \textbf{0.0944}* & 18.62\% \\
            & R@20 & 0.1165 & 0.1424 & 0.1626 & 0.1529 & 0.1640 & \underline{0.1782} & 0.1690 & \textbf{0.1970}* & 10.56\% \\
            & N@20 & 0.0600 & 0.0741 & 0.0901 & 0.0840 & 0.0919 & \underline{0.0998} & 0.0984 & \textbf{0.1161}* & 16.38\% \\
        \midrule
        \multirow{4}{*}{Yelp} 
            & R@10 & 0.0211 & 0.0165 & 0.0246 & 0.0233 & 0.0243 & \underline{0.0291} & \underline{0.0291} & \textbf{0.0359}* & 23.26\% \\
            & N@10 & 0.0164 & 0.0143 & 0.0230 & 0.0216 & 0.0226 & 0.0268 & \underline{0.0273} & \textbf{0.0342}* & 25.52\% \\
            & R@20 & 0.0377 & 0.0310 & 0.0443 & 0.0416 & 0.0439 & \underline{0.0519} & 0.0506 & \textbf{0.0632}* & 21.88\% \\
            & N@20 & 0.0230 & 0.0197 & 0.0299 & 0.0280 & 0.0296 & \underline{0.0347} & 0.0346 & \textbf{0.0435}* & 25.19\% \\
        \midrule
        \multirow{4}{*}{ML10M} 
            & R@10 & 0.0965 & 0.0698 & 0.1010 & 0.1124 & 0.0959 & \underline{0.1249} & 0.1204 & \textbf{0.1287}* & 3.04\% \\
            & N@10 & 0.0756 & 0.0608 & 0.0858 & 0.0996 & 0.0846 & \underline{0.1109} & 0.1068 & \textbf{0.1183}* & 6.66\% \\
            & R@20 & 0.1456 & 0.1116 & 0.1612 & 0.1807 & 0.1562 & \underline{0.1972} & 0.1878 & \textbf{0.1979} & 0.36\% \\
            & N@20 & 0.0941 & 0.0737 & 0.1056 & 0.1214 & 0.1046 & \underline{0.1342} & 0.1280 & \textbf{0.1393}* & 3.77\% \\
        \midrule
        \multirow{4}{*}{Netflix} 
            & R@10 & 0.0877 & 0.0515 & 0.0830 & 0.0884 & 0.0798 & \underline{0.0985} & 0.0960 & \textbf{0.1118}* & 13.47\% \\
            & N@10 & 0.0736 & 0.0599 & 0.0981 & 0.1089 & 0.0944 & \underline{0.1208} & 0.1169 & \textbf{0.1461}* & 20.97\% \\
            & R@20 & 0.1228 & 0.0832 & 0.1292 & 0.1371 & 0.1266 & \underline{0.1522} & 0.1466 & \textbf{0.1649}* & 8.35\% \\
            & N@20 & 0.0873 & 0.0675 & 0.1090 & 0.1191 & 0.1057 & \underline{0.1319} & 0.1273 & \textbf{0.1539}* & 16.64\% \\
        \bottomrule
    \end{tabular}
    \label{tab:result1}
\end{table*}

\begin{table}[!t]
    \centering
    \caption{Training time comparison between Mult-BiW-MF and CPR-MF across six datasets. Boldface indicates the best results. Here, "s" and "m" denote seconds and minutes, respectively.}
    \begin{tabular}{lcccccc}
        \toprule
            \multirow{3}{*}{Dataset} &  \multicolumn{3}{c}{Mult-BiW-MF} & \multicolumn{3}{c}{CPR-MF} \\
            \cmidrule(lr){2-4} \cmidrule(lr){5-7} 
            & Per Epoch & \# Epochs & Total & Per Epoch & \# Epochs & Total \\
        \midrule
            Amazon & 3.76 s & 250 & \textbf{15.67 m} & 2.09 s & 592 & 20.62 m \\
            Google & 1.04 s & 240 & \textbf{4.15 m} & 0.63 s & 608 & 6.34 m \\
            Gowalla & 2.77 s & 250 & \textbf{11.53 m} & 1.19 s & 700 & 13.89 m \\
            Yelp & 3.93 s & 270 & \textbf{17.71 m} & 2.87 s & 500 & 23.89 m \\
            ML10M & 2.59 s & 110 & 4.75 m & 2.54 s & 96 & \textbf{4.06 m} \\
            Netflix & 4.07 s & 140 & 9.50 m & 3.98 s & 84 & \textbf{5.57 m} \\
        \bottomrule
    \end{tabular}
    \label{tab:training_time}
\end{table}

\subsubsection{Baselines}
We compare Mult-BiW with the following state-of-the-art popularity debiasing methods:
\begin{itemize}
    \item \textbf{Expo-MF} \cite{liang2016modeling}: This method models user exposure as a latent variable and infers it jointly with preferences in collaborative filtering.
    \item \textbf{Rel-MF} \cite{saito2020unbiased}: This is an IPS-based method using a pointwise unbiased estimator to optimize user-item relevance.
    \item \textbf{UBPR} \cite{saito2020unbiased2}: This is an IPS-based method employing an unbiased pairwise ranking loss to learn user-item relevance.
    \item \textbf{DICE} \cite{zheng2021disentangling}: This method learns disentangled embeddings for user interest and conformity by sampling cause-specific data.
    \item \textbf{MACR} \cite{wei2021model}: This method uses multi-task learning to decouple user-item matching, user conformity, and item popularity.
    \item \textbf{CPR} \cite{wan2022cross}: This method constructs unbiased losses by combining predictions from multiple observed interactions.
    \item \textbf{TTEN} \cite{kim2023test}: This method controls item popularity effects by normalizing item embeddings at inference time.
\end{itemize}

We further include two variants of the proposed method:
\begin{itemize}
    \item \textbf{Mult-IPS}: This method combines the multinomial likelihood with vanilla IPS for popularity debiasing.
    \item \textbf{Mult-CIPS}: This method applies propensity clipping to Mult-IPS to reduce propensity estimator variance.
\end{itemize}

Most debiasing methods can be paired with different backbone models. For a fair comparison, we adopted matrix factorization (MF) \cite{koren2009matrix} as the default backbone across all methods. In Section~\ref{backbone}, we further evaluate IPS-based methods using the state-of-the-art LightGCN backbone \cite{he2020lightgcn}. Additional results involving recent baselines \cite{zhang2023invariant,zhang2024robust,liu2023popularity,bonner2018causal} are included in the Appendix.

\subsubsection{Evaluation Metrics}
We evaluated top-$N$ recommendation performance using two widely adopted metrics: Recall@$N$ (R@$N$) and NDCG@$N$ (N@$N$) \cite{wang2019neural,he2020lightgcn,zheng2021disentangling}. By default, $N$ was set to $10$ and $20$. Since items are uniformly distributed in the test sets, higher values of these metrics indicate improvements in both accuracy and fairness. Following standard practice \cite{krichene2020sampled}, for each user, all items with which the user had not interacted during training or validation were treated as candidates and ranked. Each model was randomly initialized five times, and the average performance was reported.

\subsubsection{Hyperparameter Settings} 

Our proposed framework was implemented in TensorFlow. For baseline models, we used the official code provided in their respective repositories. For all models, the embedding dimension was fixed at $d=64$. The learning rate was tuned in \{1e-3, 5e-4, 2e-4, 1e-4\}, and the $l_2$ regularization coefficient was selected from \{0, 1e-7, 1e-6, 1e-5, 1e-4\}. Models were trained using the Adam optimizer \cite{kingma2014adam} with a batch size of $256$.

For Mult-BiW with a fixed $\alpha$ (Mult-FBiW), $\alpha$ was tuned in $\{0, 0.1, \ldots, 1.0\}$. For Progressive Bi-Weighting (Mult-PBiW), the cumulative learning parameter $\eta$ was tuned in $\{0.5, 1.0, 2.0\}$. Hyperparameters of all baseline methods were tuned on the validation set following the recommendations in their respective papers, and the best-performing configurations were reported.

\subsection{Main Results (RQ1)}

\begin{table*}[!t]
    \small
    \centering
    \caption{Performance comparison with IPS-based methods using the Mult-MF and Mult-LightGCN backbones across six datasets. For each backbone, the best results are highlighted in boldface.}
    \begin{tabular}{llcccccccccc}
        \toprule
            \multirow{3}{*}{Dataset} & \multirow{3}{*}{Metric} &  \multicolumn{5}{c}{Mult-MF} & \multicolumn{5}{c}{Mult-LightGCN} \\
            \cmidrule(lr){3-7} \cmidrule(lr){8-12} 
            & & None & IPS & CIPS & FBiW & PBiW & None & IPS & CIPS & FBiW & PBiW \\
        \midrule
        \multirow{4}{*}{Amazon} 
            & R@10 & 0.0631 & 0.0661 & 0.0738 & 0.0775 & \textbf{0.0807} & 0.0694 & 0.0882 & 0.0898 & 0.0921 & \textbf{0.0980} \\
            & N@10 & 0.0544 & 0.0586 & 0.0667 & 0.0691 & \textbf{0.0716} & 0.0616 & 0.0804 & 0.0813 & 0.0841 & \textbf{0.0890} \\
            & R@20 & 0.0947 & 0.0944 & 0.1069 & 0.1115 & \textbf{0.1161} & 0.1013 & 0.1256 & 0.1269 & 0.1311 & \textbf{0.1396} \\
            & N@20 & 0.0651 & 0.0680 & 0.0776 & 0.0803 & \textbf{0.0834} & 0.0720 & 0.0926 & 0.0933 & 0.0967 & \textbf{0.1024} \\
        \midrule
        \multirow{4}{*}{Google} 
            & R@10 & 0.1293 & 0.1300 & 0.1393 & 0.1392 & \textbf{0.1441} & 0.1512 & 0.1594 & 0.1604 & \textbf{0.1650} & 0.1622 \\
            & N@10 & 0.0896 & 0.0916 & 0.0970 & 0.0985 & \textbf{0.1016} & 0.1062 & 0.1098 & 0.1112 & \textbf{0.1151} & 0.1148 \\
            & R@20 & 0.1901 & 0.1904 & 0.2041 & 0.2042 & \textbf{0.2122} & 0.2180 & 0.2289 & 0.2298 & 0.2349 & \textbf{0.2355} \\
            & N@20 & 0.1091 & 0.1112 & 0.1177 & 0.1190 & \textbf{0.1233} & 0.1276 & 0.1324 & 0.1336 & 0.1375 & \textbf{0.1384} \\
        \midrule
        \multirow{4}{*}{Gowalla} 
            & R@10 & 0.1062 & 0.1095 & 0.1214 & 0.1244 & \textbf{0.1288} & 0.1062 & 0.1390 & 0.1415 & 0.1428 & \textbf{0.1498} \\
            & N@10 & 0.0771 & 0.0802 & 0.0894 & 0.0922 & \textbf{0.0944} & 0.0794 & 0.1022 & 0.1055 & 0.1077 & \textbf{0.1128} \\
            & R@20 & 0.1683 & 0.1664 & 0.1845 & 0.1896 & \textbf{0.1970} & 0.1649 & 0.2098 & 0.2125 & 0.2144 & \textbf{0.2255} \\
            & N@20 & 0.0968 & 0.0985 & 0.1096 & 0.1128 & \textbf{0.1161} & 0.0976 & 0.1249 & 0.1279 & 0.1303 & \textbf{0.1367} \\
        \midrule
        \multirow{4}{*}{Yelp} 
            & R@10 & 0.0255 & 0.0259 & 0.0311 & 0.0354 & \textbf{0.0359} & 0.0280 & 0.0354 & 0.0387 & 0.0418 & \textbf{0.0455} \\
            & N@10 & 0.0238 & 0.0237 & 0.0297 & 0.0338 & \textbf{0.0342} & 0.0274 & 0.0332 & 0.0365 & 0.0404 & \textbf{0.0434} \\
            & R@20 & 0.0462 & 0.0454 & 0.0556 & 0.0612 & \textbf{0.0632} & 0.0480 & 0.0631 & 0.0674 & 0.0723 & \textbf{0.0784} \\
            & N@20 & 0.0310 & 0.0306 & 0.0381 & 0.0425 & \textbf{0.0435} & 0.0339 & 0.0428 & 0.0464 & 0.0507 & \textbf{0.0545} \\
        \midrule
        \multirow{4}{*}{ML10M} 
            & R@10 & 0.1132 & 0.0817 & 0.1023 & 0.1258 & \textbf{0.1287} & 0.1302 & 0.0992 & 0.1084 & 0.1333 & \textbf{0.1354} \\
            & N@10 & 0.1011 & 0.0691 & 0.0930 & 0.1162 & \textbf{0.1183} & 0.1176 & 0.0861 & 0.0988 & 0.1227 & \textbf{0.1240} \\
            & R@20 & 0.1748 & 0.1352 & 0.1639 & 0.1937 & \textbf{0.1979} & 0.2031 & 0.1595 & 0.1749 & 0.2060 & \textbf{0.2103} \\
            & N@20 & 0.1202 & 0.0869 & 0.1122 & 0.1369 & \textbf{0.1393} & 0.1403 & 0.1055 & 0.1197 & 0.1450 & \textbf{0.1473} \\
        \midrule
        \multirow{4}{*}{Netflix} 
            & R@10 & 0.0937 & 0.0641 & 0.0897 & 0.1096 & \textbf{0.1118} & 0.1045 & 0.0834 & 0.0963 & 0.1138 & \textbf{0.1143} \\
            & N@10 & 0.1144 & 0.0796 & 0.1209 & 0.1452 & \textbf{0.1461} & 0.1320 & 0.1038 & 0.1274 & 0.1470 & \textbf{0.1480} \\
            & R@20 & 0.1392 & 0.0995 & 0.1364 & 0.1617 & \textbf{0.1649} & 0.1557 & 0.1285 & 0.1444 & 0.1695 & \textbf{0.1719} \\
            & N@20 & 0.1231 & 0.0870 & 0.1274 & 0.1522 & \textbf{0.1539} & 0.1408 & 0.1130 & 0.1348 & 0.1562 & \textbf{0.1577} \\
        \bottomrule
    \end{tabular}
    \label{tab:result2}
\end{table*}

\subsubsection{Effectiveness Comparison}

Table~\ref{tab:result1} reports the top-$N$ recommendation performance of all compared methods using MF as the backbone model. Several observations can be made. First, among IPS-based baselines, the pairwise unbiased learning method UBPR consistently outperforms the pointwise method Rel-MF, indicating that pairwise objectives are more effective than pointwise ones. Second, CPR and TTEN achieve the strongest performance among all baseline methods, reflecting the benefits of cross-interaction modeling and inference-time normalization, respectively.

Across all datasets, the proposed Mult-BiW consistently achieves the best performance. Compared with the strongest baseline on each dataset, Mult-BiW yields an average improvement of $9.43\%$ in Recall@20 and $14.80\%$ in NDCG@20. These gains are particularly pronounced on datasets with more severe popularity skew (e.g., Amazon Books and Yelp), suggesting that Mult-BiW is especially effective under strong long-tail distributions.

Notably, Rel-MF and UBPR are also IPS-based methods, but rely on pointwise logistic loss and pairwise ranking loss, respectively. The superior performance of Mult-BiW highlights the advantages of adopting a listwise objective based on the multinomial likelihood, which captures global user preferences over the entire item set, and introducing Bi-Weighting to smooth propensity estimation and mitigate estimation errors for tail items.

\subsubsection{Efficiency Comparison}
\label{efficiency}
Table~\ref{tab:training_time} compares the training efficiency of Mult-BiW and CPR under the MF backbone. As reported in \cite{wan2022cross}, CPR achieves high efficiency via dynamic interaction sampling, making it substantially faster than several earlier debiasing methods such as Rel-MF, UBPR, and DICE. Despite this advantage, Mult-BiW exhibits comparable training time on ML10M and Netflix, and is even faster than CPR on the remaining four datasets.

This efficiency stems from the fact that Mult-BiW introduces no additional model parameters or sampling procedures. The Bi-Weighting factors are precomputed from item popularity statistics and incur negligible overhead during training. These results demonstrate that Mult-BiW not only improves recommendation quality but also maintains high computational efficiency, making it suitable for large-scale applications.

\begin{figure}[!t]
    \centering
    \subfloat{\includegraphics[width=0.33\linewidth]{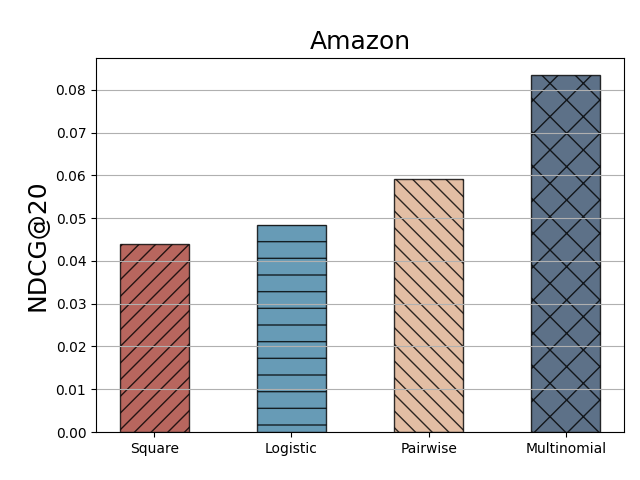}}
    \hfil
    \subfloat{\includegraphics[width=0.33\linewidth]{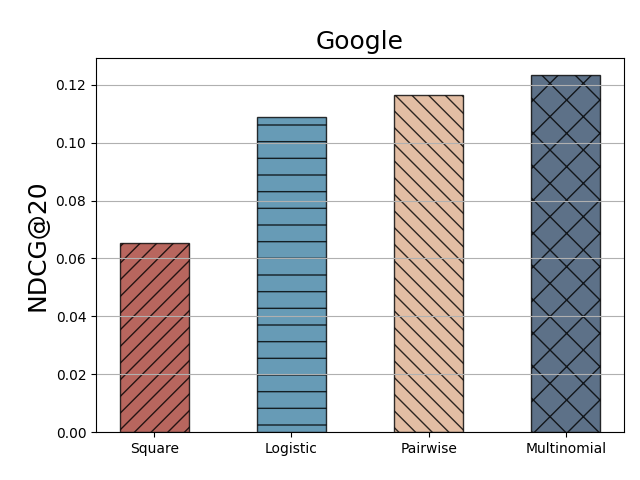}}
    \hfil
    \subfloat{\includegraphics[width=0.33\linewidth]{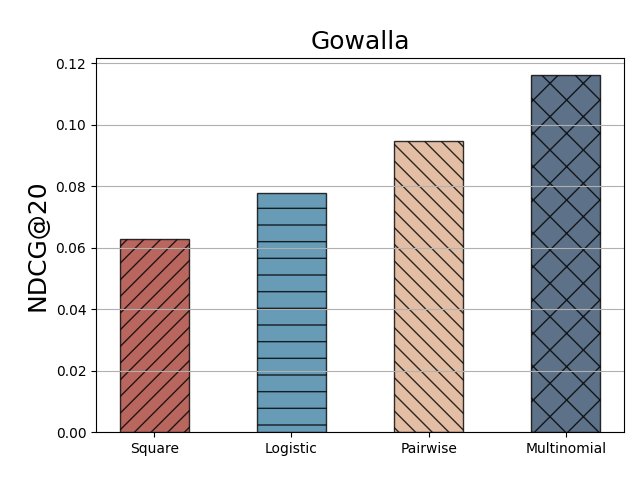}}
    \hfil
    \subfloat{\includegraphics[width=0.33\linewidth]{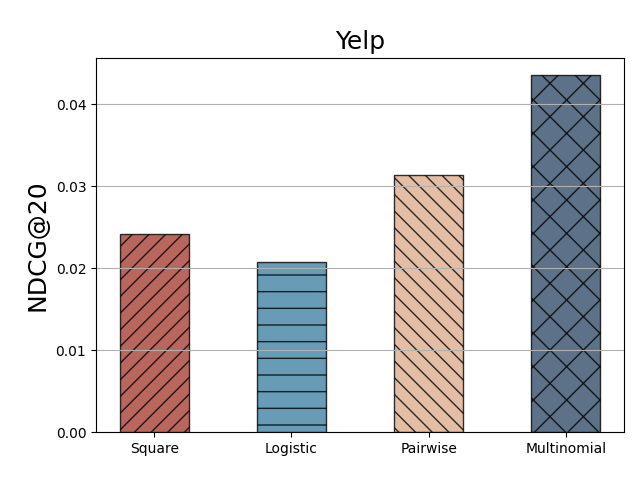}}
    \hfil
    \subfloat{\includegraphics[width=0.33\linewidth]{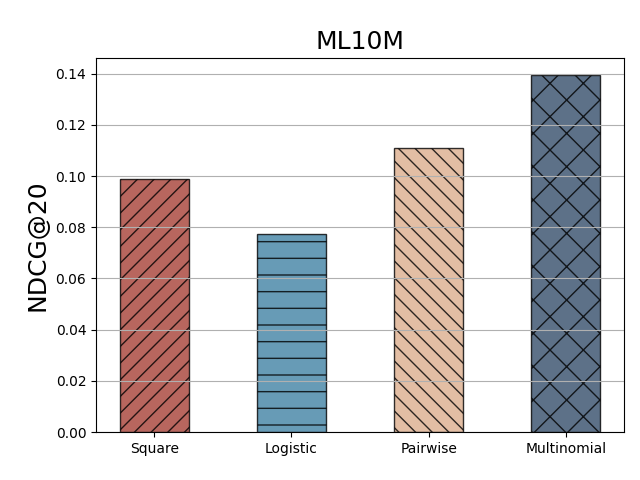}}
    \hfil
    \subfloat{\includegraphics[width=0.33\linewidth]{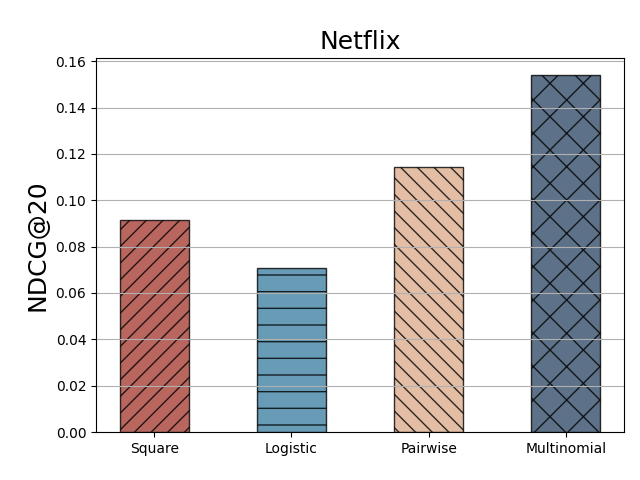}}
    \caption{Performance of BiW-MF with different loss functions on six datasets.}
    \label{fig:loss}
\end{figure}

\begin{figure}[!t]
    \centering
    \subfloat{\includegraphics[width=0.33\linewidth]{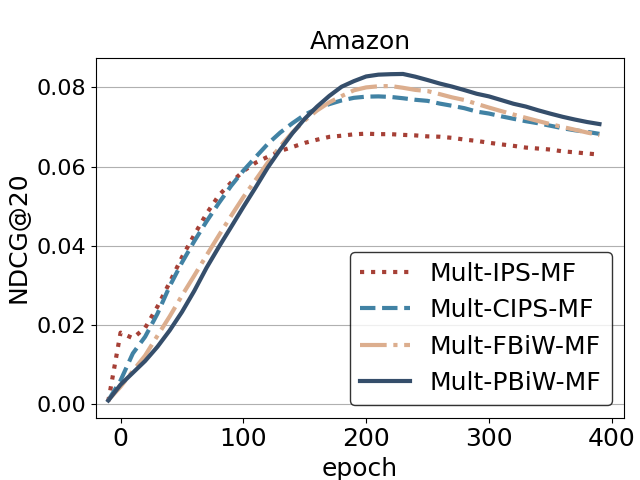}}
    \hfil
    \subfloat{\includegraphics[width=0.33\linewidth]{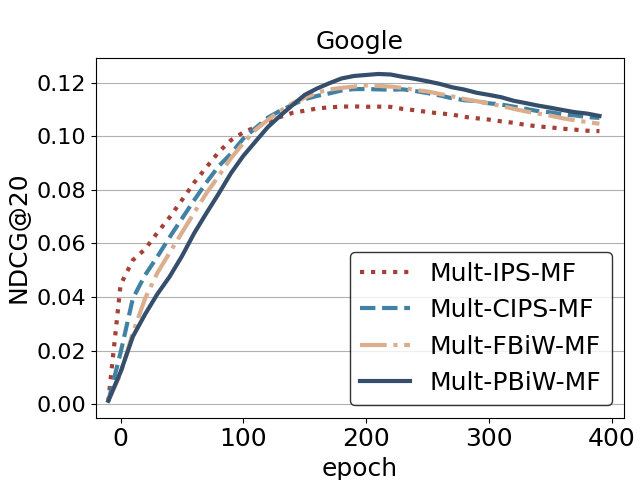}}
    \hfil
    \subfloat{\includegraphics[width=0.33\linewidth]{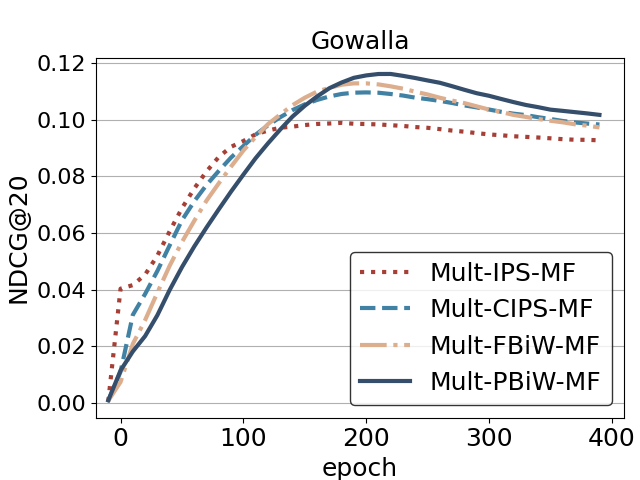}}
    \hfil
    \subfloat{\includegraphics[width=0.33\linewidth]{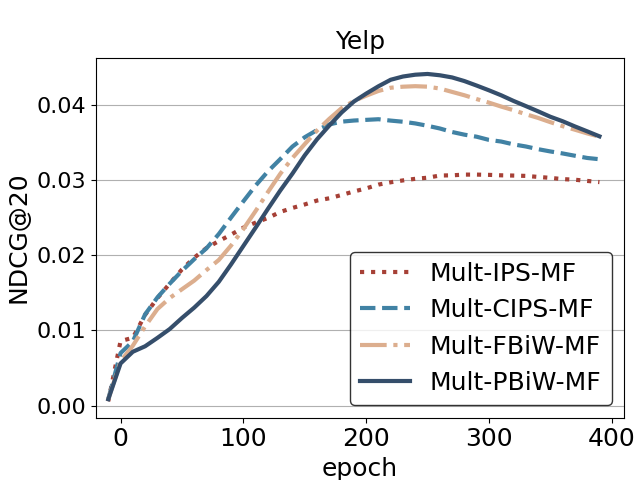}}
    \hfil
    \subfloat{\includegraphics[width=0.33\linewidth]{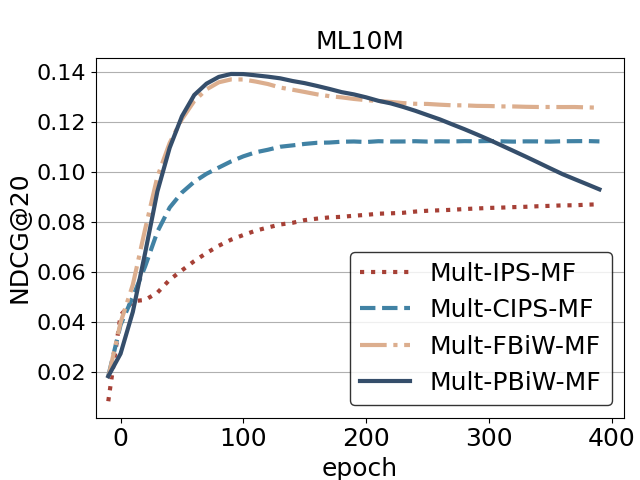}}
    \hfil
    \subfloat{\includegraphics[width=0.33\linewidth]{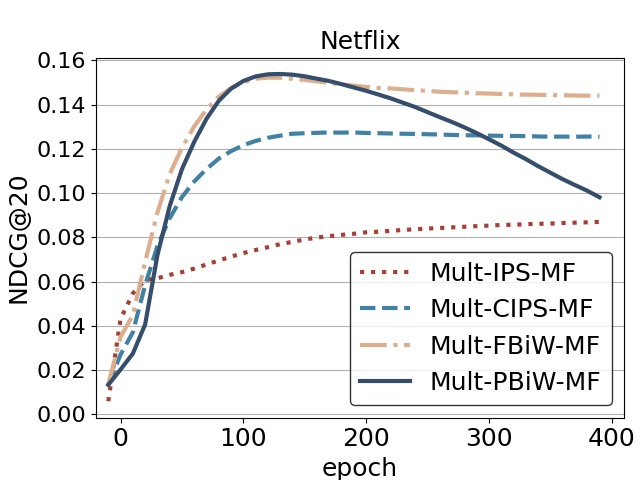}}
    \caption{Performance curves of Mult-IPS-MF, Mult-CIPS-MF, Mult-FBiW-MF, and Mult-PBiW-MF on six datasets.}
    \label{fig:curve}
\end{figure}

\subsection{Ablation Studies (RQ2)}

\subsubsection{Loss Function Comparison} 

Figure~\ref{fig:loss} compares the performance of BiW-MF under different loss functions. The multinomial likelihood consistently outperforms all alternatives across datasets, confirming the effectiveness of listwise unbiased learning for recommendation. By modeling the full item distribution for each user, the multinomial likelihood enables the model to learn global preference structures rather than relying on local comparisons.

The pairwise loss achieves the second-best performance, benefiting from its ranking-aware nature. In contrast, pointwise squared loss and pointwise logistic loss perform substantially worse, likely due to their limited capacity to capture relative item importance and their sensitivity to popularity-induced label imbalance. These results validate our design choice of combining IPS with a listwise objective.

\subsubsection{Bi-Weighting vs. IPS}
\label{backbone}

Table~\ref{tab:result2} presents the performance of IPS-based methods under two backbone models: Mult-MF and Mult-LightGCN. The vanilla IPS and clipped IPS (CIPS) methods exhibit unstable and often suboptimal performance, particularly on ML10M and Netflix. This behavior can be attributed to inaccurate propensity estimation and the large variance induced by extreme inverse weights, which are only partially mitigated by clipping.

In contrast, the proposed Bi-Weighting strategy substantially improves robustness by smoothing the estimated propensities. Both fixed Bi-Weighting (FBiW) and progressive Bi-Weighting (PBiW) consistently outperform IPS and CIPS across datasets and backbone models. Moreover, PBiW achieves the best overall performance in most cases, demonstrating the benefit of progressively shifting from representation learning to popularity debiasing during training.

Figure~\ref{fig:curve} further illustrates the test performance of IPS, CIPS, FBiW, and PBiW over training epochs using MF as the backbone. IPS and CIPS show either slow convergence or early overfitting, whereas Bi-Weighting methods achieve faster and more stable improvements. PBiW consistently outperforms FBiW, highlighting the effectiveness of cumulative learning. However, the figure also reveals that overly aggressive debiasing (i.e., continuously decreasing $\alpha$) may lead to overfitting, underscoring the importance of controlled smoothing in propensity estimation.

\begin{figure}[!t]
    \centering
    \subfloat{\includegraphics[width=0.33\linewidth]{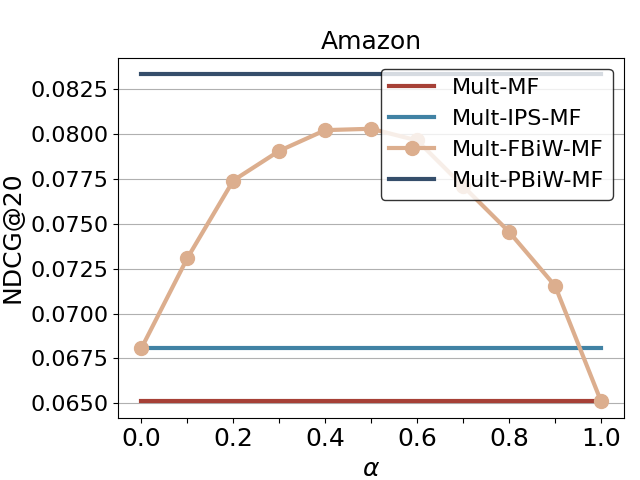}}
    \hfil
    \subfloat{\includegraphics[width=0.33\linewidth]{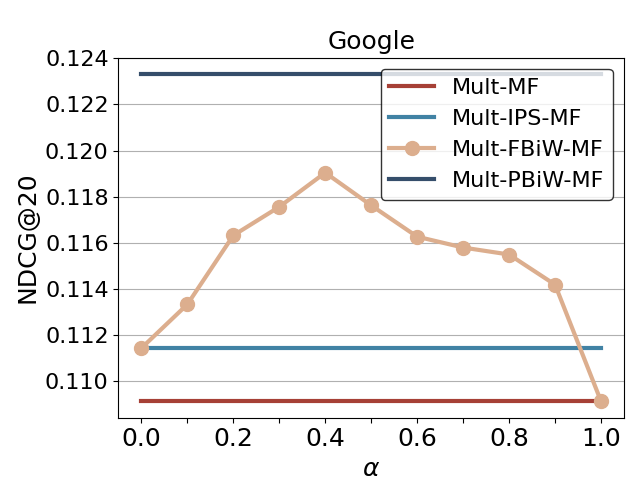}}
    \hfil
    \subfloat{\includegraphics[width=0.33\linewidth]{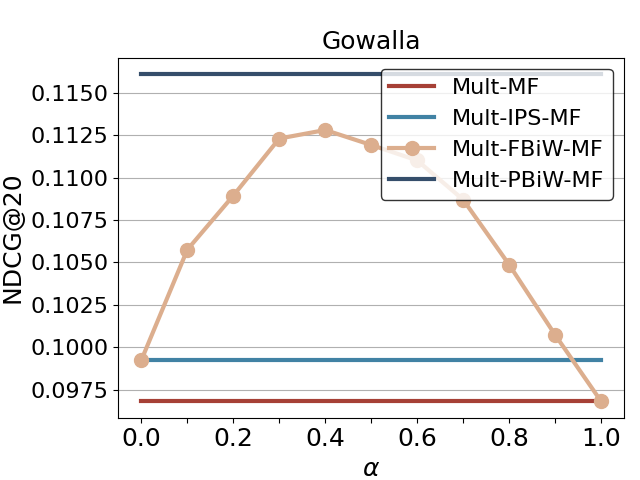}}
    \hfil
    \subfloat{\includegraphics[width=0.33\linewidth]{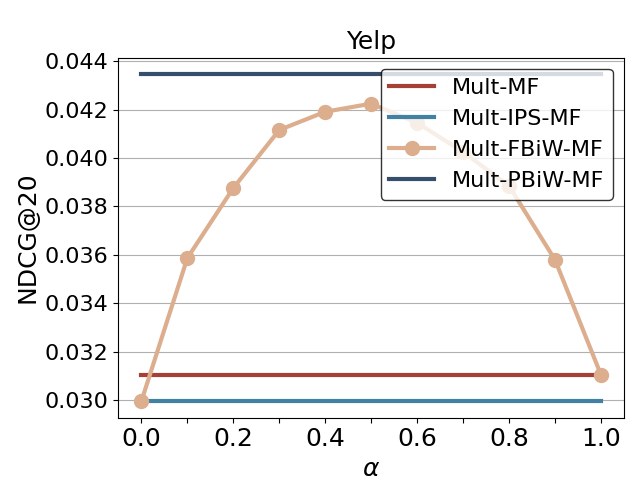}}
    \hfil
    \subfloat{\includegraphics[width=0.33\linewidth]{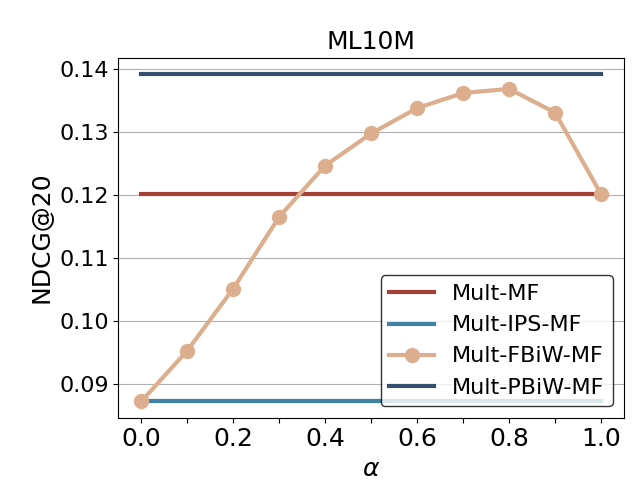}}
    \hfil
    \subfloat{\includegraphics[width=0.33\linewidth]{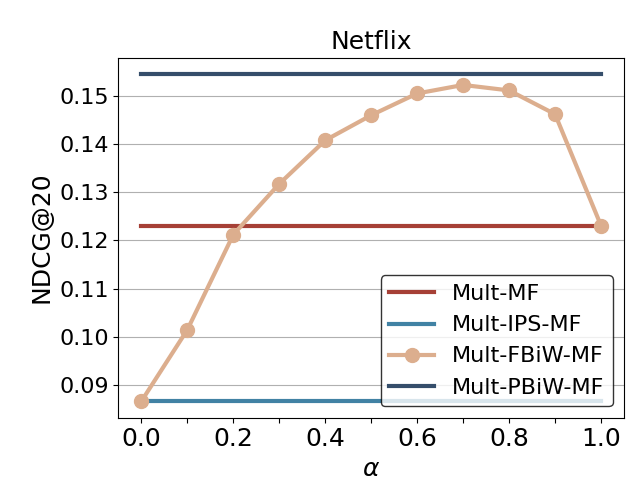}}
    \caption{Performance of Mult-MF, Mult-IPS-MF, Mult-FBiW-MF, and Mult-PBiW-MF  w.r.t. $\alpha$ on six datasets.}
    \label{fig:alpha}
\end{figure}

\subsection{Hyperparameter Studies (RQ3)}
\subsubsection{Effect of Model Balance Coefficient $\alpha$}

The performance of Bi-Weighting critically depends on the balance coefficient $\alpha$, which controls the trade-off between individual propensities and the collection model. Using Mult-MF as the backbone, we vary $\alpha$ in $\{0, 0.1, \ldots, 1.0\}$, with results shown in Figure~\ref{fig:alpha}. When $\alpha=0$, FBiW reduces to vanilla IPS, while $\alpha=1$ corresponds to the naive estimator that ignores propensities entirely.

The results show that FBiW significantly outperforms IPS for a wide range of intermediate $\alpha$ values, confirming that appropriate smoothing improves the robustness of propensity estimation. Furthermore, PBiW consistently surpasses FBiW, even when FBiW uses its optimal $\alpha$, demonstrating the advantage of dynamically balancing representation learning and debiasing throughout training.

\subsubsection{Effect of Cumulative Learning Parameter $\eta$}
\label{sec:eta}

We further analyze the impact of the cumulative learning parameter $\eta$, which controls the decay rate of $\alpha$ in PBiW. Figure~\ref{fig:eta} reports results for $\eta \in \{0.5, 1.0, 2.0\}$ using Mult-MF and Mult-LightGCN backbones. For MF, smaller values of $\eta$ generally yield better performance, suggesting that MF benefits from earlier emphasis on debiasing. In contrast, LightGCN performs better with larger $\eta$ values, which allow more epochs for representation learning before stronger debiasing is applied.

This difference can be explained by the increased complexity of representation learning in LightGCN, which relies on multilayer graph convolution to propagate embeddings. A slower transition to debiasing helps preserve representation quality before correcting popularity bias.

\subsection{Debiasing Visualization (RQ4)}

Figure~\ref{fig:visualize} visualizes the effectiveness of different methods in mitigating popularity bias by plotting the relationship between item popularity in the training data and item frequency in top-$10$ recommendation lists. Results are shown for Mult-MF, CPR-MF, UBPR-MF, and Mult-BiW-MF on three representative datasets.

Without debiasing, Mult-MF exhibits a strong positive correlation between training popularity and recommendation frequency, indicating a heavy bias toward popular items. All debiasing methods reduce this correlation, demonstrating their ability to control popularity bias. However, the magnitude of debiasing differs substantially. Compared with CPR-MF and UBPR-MF, Mult-BiW-MF produces the flattest curves and the lowest Pearson correlation coefficients, indicating a stronger and more consistent debiasing effect. These results confirm that Mult-BiW-MF effectively promotes more balanced item exposure in recommendation policies.

\subsection{Heterogeneous Analysis of Item Groups (RQ5)}

To investigate how debiasing affects different segments of the item distribution, we analyze the performance of Mult-MF, Mult-IPS-MF, and Mult-BiW-MF across item groups partitioned by popularity. The average NDCG@10 for each group is shown in Figure~\ref{fig:group}.

The results indicate that the performance gains achieved by Mult-IPS-MF and Mult-BiW-MF over Mult-MF stem from an explicit accuracy trade-off across item groups: while recommendation accuracy on head items is moderately reduced, substantial improvements are observed for tail items. This shift reflects the effectiveness of IPS-based debiasing in mitigating popularity bias. Moreover, Mult-BiW-MF consistently outperforms Mult-IPS-MF, with particularly pronounced gains on less popular items. These findings highlight the advantage of the proposed Bi-Weighting strategy in improving the robustness of propensity estimation and achieving a more balanced performance across heterogeneous item groups.

\begin{figure}[!t]
    \centering
    \subfloat{\includegraphics[width=0.33\linewidth]{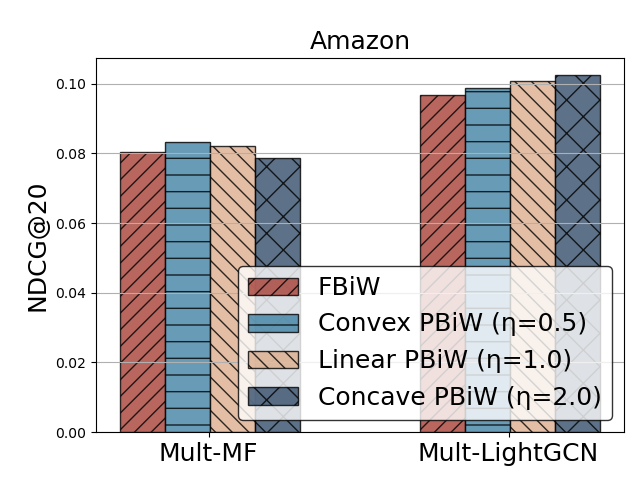}}
    \hfil
    \subfloat{\includegraphics[width=0.33\linewidth]{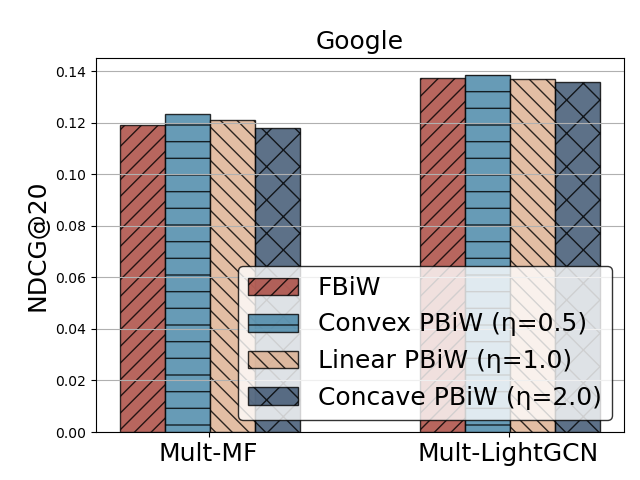}}
    \hfil
    \subfloat{\includegraphics[width=0.33\linewidth]{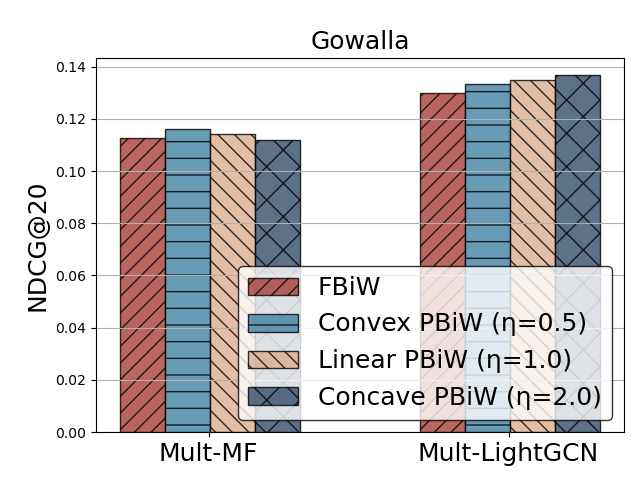}}
    \hfil
    \subfloat{\includegraphics[width=0.33\linewidth]{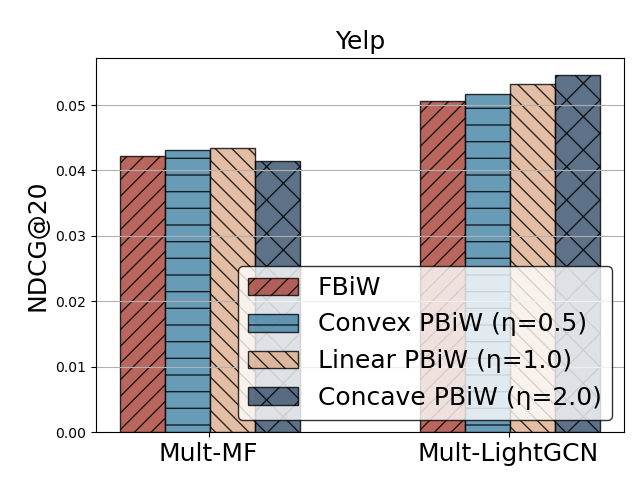}}
    \hfil
    \subfloat{\includegraphics[width=0.33\linewidth]{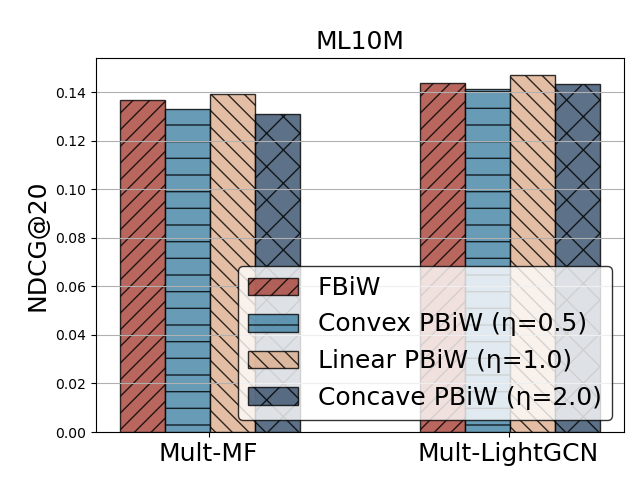}}
    \hfil
    \subfloat{\includegraphics[width=0.33\linewidth]{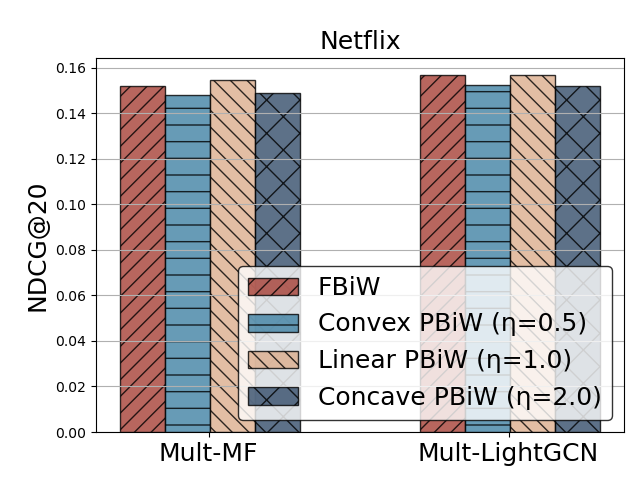}}
    \caption{Performance of Mult-MF and Mult-LightGCN with various Bi-Weighting strategies on six datasets.}
    \label{fig:eta}
\end{figure}

\section{Related Work}
\label{sec:rel}

\subsection{Collaborative Recommendation}
Research on recommender systems has evolved substantially since early collaborative filtering studies \cite{adomavicius2005toward}. Memory-based methods such as ItemKNN compute item similarity directly from interaction histories \cite{linden2003amazon,sarwar2001item}. Model-based approaches, most notably matrix factorization (MF), learn low-dimensional user and item representations and predict interactions via inner products \cite{koren2009matrix}. To alleviate data sparsity and better capture item-item relations, item-based models such as SLIM \cite{ning2011slim} and FISM \cite{kabbur2013fism} explicitly or implicitly learn item similarity structures.

Recent work has explored deep learning and graph-based architectures. NeuMF \cite{he2017neural} combines generalized matrix factorization with multilayer perceptrons to model nonlinear interactions, while NAIS \cite{he2018nais} extends item-based modeling by introducing attention mechanisms to differentiate the importance of historical items. Graph neural network approaches, such as NGCF \cite{wang2019neural}, model higher-order connectivity between users and items, and LightGCN \cite{he2020lightgcn} demonstrates that simplified graph convolutional designs can achieve strong performance by removing feature transformations and nonlinear activations. Self-supervised learning techniques, such as SGL \cite{wu2021self}, further enhance representation robustness through contrastive learning on user-item graphs. More recently, generative and diffusion-based models have been introduced for recommendation, with DiffRec \cite{wang2023diffusion} addressing efficiency and temporal preference shifts.

Despite these advances, most prior work focuses on improving recommendation accuracy through model design or training strategies, with limited consideration of popularity bias and its impact on recommendation fairness. Our work is complementary to these efforts, as we propose a model-agnostic debiasing method that can be applied to a wide range of existing recommenders.

\begin{figure}[!t]
    \centering
    \subfloat{\includegraphics[width=0.33\linewidth]{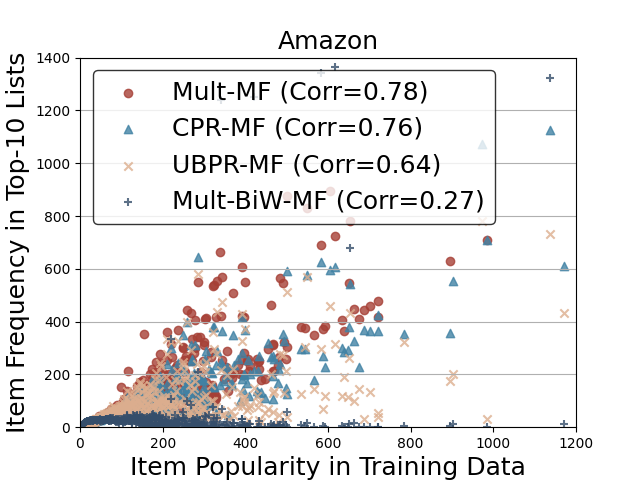}}
    \hfil
    \subfloat{\includegraphics[width=0.33\linewidth]{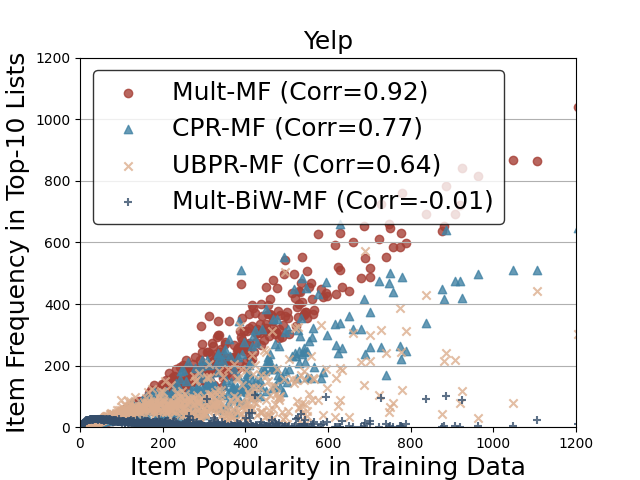}}
        \hfil
    \subfloat{\includegraphics[width=0.33\linewidth]{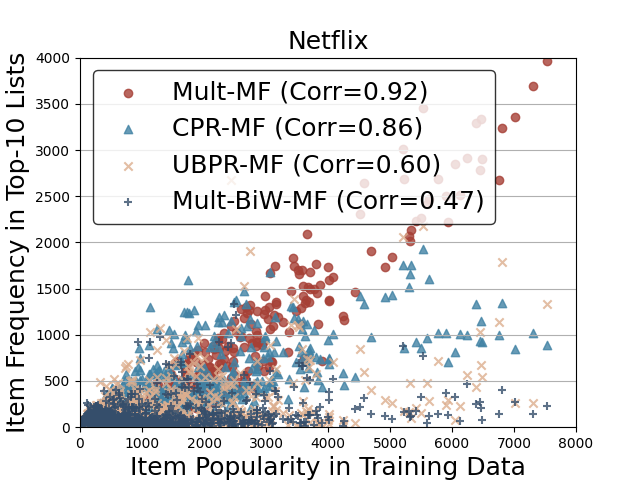}}
    \caption{Item frequency in top-10 recommendation lists vs. item popularity in training data on three datasets.}
    \label{fig:visualize}
\end{figure}

\begin{figure}[!t]
    \centering
    \subfloat{\includegraphics[width=0.33\linewidth]{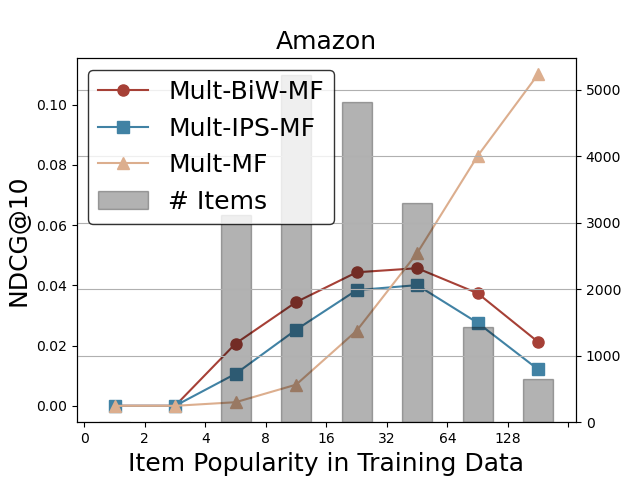}}
    \hfil
    \subfloat{\includegraphics[width=0.33\linewidth]{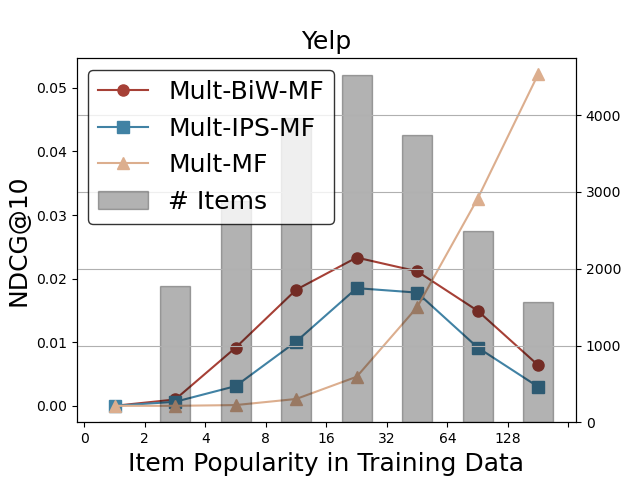}}
    \hfil
    \subfloat{\includegraphics[width=0.33\linewidth]{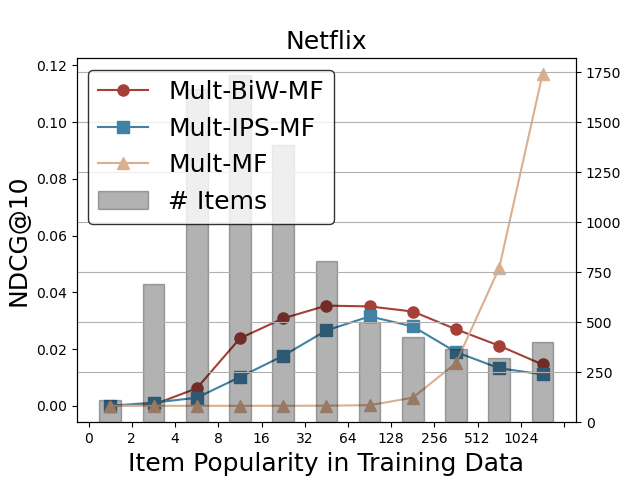}}
    \caption{Performance of Mult-MF, Mult-IPS-MF, and Mult-BiW-MF across item groups on three datasets.}
    \label{fig:group}
\end{figure}

\subsection{Loss Functions for Implicit Feedback}
Training recommendation models from implicit feedback data is challenging due to extreme sparsity and the absence of explicit negative signals. Existing approaches can be broadly categorized into pointwise, pairwise, and listwise loss functions.

Pointwise loss functions treat each user-item interaction independently and regress predicted scores to observed labels. Common examples include weighted squared loss \cite{hu2008collaborative,chen2020efficient2} and logistic or binary cross-entropy loss \cite{li2017neural}. To improve scalability, sampled binary cross-entropy variants draw one or more negative items for each positive instance \cite{johnson2014logistic,he2017neural,kang2018self,petrov2023gsasrec}. While efficient, pointwise objectives primarily capture local preferences at the individual interaction level.

Pairwise loss functions are designed to model relative preferences between items. BPR \cite{rendle2009bpr} constructs user-positive-negative triples and optimizes a ranking-based objective, while CML \cite{hsieh2017collaborative} employs a weighted hinge loss defined over embedding distances. Although pairwise approaches are effective for ranking with large item sets, they may suffer from slow convergence, sensitivity to negative sampling, and suboptimal global ranking performance \cite{he2016fast,rendle2014improving}.

Listwise loss functions directly optimize over an entire item list and are therefore better aligned with ranking metrics. For example, Mult-VAE \cite{liang2018variational} adopts a softmax cross-entropy loss to train variational autoencoders for recommendation, and sampled softmax losses have been proposed as efficient approximations for large-scale settings \cite{wu2024effectiveness}.

In the context of popularity-debiased recommendation, most existing methods rely on pointwise or pairwise losses \cite{saito2020unbiased,saito2020unbiased2,zheng2021disentangling,wan2022cross}, which may limit their ability to learn accurate global rankings under bias correction. In contrast, our work introduces a listwise debiasing framework that applies a multinomial likelihood to IPS-based methods, enabling more effective and efficient learning of balanced recommendations.

\subsection{Popularity Bias and Debiasing Strategies}
Popularity bias arises from the long-tailed distribution of user feedback, where a small number of popular items account for a large fraction of interactions \cite{abdollahpouri2020multi}. Models trained on such data tend to inherit and amplify this imbalance, resulting in the over-recommendation of popular items \cite{chen2023bias}. Existing mitigation methods are commonly grouped into pre-processing, in-processing, and post-processing approaches, depending on the stage at which bias is addressed.

Pre-processing methods aim to debias recommendation by modifying the training data or reweighting samples before model learning. A prominent line of work is based on inverse propensity scoring (IPS), which down-weights over-exposed interactions to obtain unbiased estimators \cite{schnabel2016recommendations,ma2019missing}. Representative methods include Rel-MF \cite{saito2020unbiased} and UBPR \cite{saito2020unbiased2}. AutoDebias \cite{chen2021autodebias} further combines IPS with data imputation and meta-learning to automatically identify effective debiasing strategies. Closely related approaches include doubly robust and imputation-based methods that integrate propensity scores with error correction. DRJL \cite{wang2019doubly} alternates between training a prediction model and an imputation model, while CDR \cite{song2023cdr} improves robustness by filtering unreliable imputed values. More recent work extends these ideas to graph-based recommenders, such as DR-GNN \cite{wang2024distributionally}, and addresses exposure bias by distinguishing between unclicked items due to user disinterest or lack of exposure, as in ReCRec \cite{lin2024recrec}.

In-processing methods mitigate popularity bias by modifying model architectures, regularization terms, or learning objectives. Zerosum \cite{rhee2022countering} introduces a regularization term that penalizes score differences among items that are equally preferred by a user. ReSN \cite{lin2025recommendation} observes that popularity bias is encoded in the dominant spectrum of the score matrix and alleviates it through spectral norm regularization. PAAC \cite{cai2024popularity} employs popularity-aware alignment and reweighted contrastive learning to reduce representation separation. Another line of work leverages causal reasoning to disentangle intrinsic user interest from popularity conformity. CausE \cite{bonner2018causal} learns separate embeddings from biased and unbiased data and aligns them via regularization. DICE \cite{zheng2021disentangling} explicitly separates interest and conformity factors, while MACR \cite{wei2021model} adopts a counterfactual multi-task framework to remove the direct effect of item popularity. CPR \cite{wan2022cross} derives an unbiased pairwise ranking loss based on carefully selected training pairs.

Post-processing methods operate at inference time and adjust recommendation scores or rankings without retraining models. These include score penalization or normalization strategies that suppress popular items \cite{abdollahpouri2017controlling}, reranking approaches that promote long-tail items \cite{zhu2021popularity}, and embedding normalization techniques such as TTEN \cite{kim2023test}. Beyond bias mitigation, several studies examine the role of popularity signals themselves. Some efforts argue that popularity can be exploited to improve recommendation utility rather than strictly eliminated \cite{zhang2021causal,zhao2022popularity}, while others analyze bias dynamics through simulation and temporal modeling \cite{zhu2021popularity2}.

Among existing debiasing techniques, IPS-based methods are particularly attractive due to their strong theoretical guarantees. However, their effectiveness critically depends on accurate propensity estimation. To address this limitation, we propose a Bi-Weighting strategy to improve propensity estimation, and we integrate it with a multinomial, listwise IPS objective to generate more accurate and balanced recommendations.

\section{Conclusion}
\label{sec:con}

In this paper, we examined the challenges of applying Inverse Propensity Scoring (IPS) to popularity debiasing in item recommendation and proposed a novel method, termed Multinomial Likelihood with Bi-Weighting (Mult-BiW). We first introduced a listwise unbiased learning framework, Mult-IPS, which integrates IPS with a multinomial likelihood to capture global and unbiased user preferences over the entire item set. Building on this framework, we proposed a Bi-Weighting (BiW) strategy that combines individual propensity estimates with a collection-level model, introducing a principled smoothing mechanism to improve the robustness of propensity estimation. We further provided theoretical analyses that establish an upper bound on the empirical bias induced by inaccurate propensities and derived the optimal form of the collection model underlying BiW. To mitigate the adverse effects of aggressive reweighting on representation learning, we designed a progressive Bi-Weighting strategy based on cumulative learning, enabling a smooth transition from representation learning to popularity debiasing. Extensive experiments on six real-world datasets demonstrate that Mult-BiW consistently outperforms state-of-the-art debiasing methods by a substantial margin, while effectively mitigating popularity bias and producing more balanced recommendation outcomes. These results validate the effectiveness of combining listwise unbiased learning with robust propensity smoothing.

Several promising directions remain for future work. First, more advanced smoothing techniques, such as Bayesian smoothing and absolute discounting, could be explored to further enhance propensity estimation robustness \cite{zhai2004study}. Second, incorporating auxiliary side information, such as user exposure signals, viewing histories, or click-through data, may enable more expressive Bi-Weighting strategies and improve debiasing performance in practical systems \cite{gao2022kuairand}. We leave these extensions for future investigation.

\begin{acks}
This work was supported by the National Natural Science Foundation of China (Grant No. 72401015 to Tianyu Zhu and Grant No. 72501067 to Yansong Shi) and the China Postdoctoral Science Foundation (Grant Nos. GZC20242152 and 2024M764106 to Tianyu Zhu).
\end{acks}



\bibliographystyle{ACM-Reference-Format}
\bibliography{main}

\end{document}